\documentclass[conference]{IEEEtran}
\IEEEoverridecommandlockouts

\usepackage{cite}
\usepackage{amsmath,amssymb,amsfonts}
\usepackage{graphicx}
\usepackage{textcomp}
\usepackage{xcolor}

\usepackage{color}
\usepackage{mathtools}
\usepackage{xspace}
\usepackage{algorithmicx}
\usepackage{algpseudocode}
\usepackage{xcolor}
\usepackage{hyperref}
 \hypersetup{
     colorlinks,
      linkcolor={red!50!black},
     citecolor={blue!50!black},
     urlcolor={blue!80!black}
 }

\usepackage{nicematrix}
\usepackage{algpseudocode}
\usepackage{cases}

\usepackage{thmtools}

\usepackage{amsmath,amsthm}
\usepackage{cleveref}
\newtheorem{theorem}{Theorem}[section]
\newtheorem{lemma}[theorem]{Lemma}

\newtheorem{problem}[theorem]{Problem}
\newtheorem{proposition}[theorem]{Proposition}

\newtheorem{example}{Example}[theorem]

\newcommand{\set}[1]{\left\{#1\right\}}

\newcommand{\st}{\ :\ }
\newcommand{\nat}{\mathbb{N}}
\newcommand{\intg}{\mathbb{Z}}
\newcommand{\rel}{\mathbb{R}}
\newcommand{\rat}{\mathbb{Q}}
\newcommand{\com}{\mathbb{C}}
\newcommand{\alg}{\overline \rat}

\newcommand{\ii}{\mathbf{i}}

\newcommand{\sq}[1]{\langle #1 \rangle_{n\in\nat}}

\newcommand{\ie}{{\em i.e.}\xspace}
\newcommand{\eg}{{\em e.g.}\xspace}
\newcommand{\etc}{{\em etc.}\xspace}

\newcommand\defeq{\mathrel{\overset{\makebox[0pt]{\mbox{\normalfont\tiny\sffamily def}}}{=}}}

\newcommand\torus{\mathbb T}

\newcommand\multg[1]{\mathbb G^{#1}}

\renewcommand{\vec}[1]{\mathbf{#1}}
\newcommand{\vp}{\vec p}
\newcommand{\sset}[1]{\mathbf{#1}}
\newcommand{\orbit}[2]{\mathcal O_{#2}(#1)}

\newcommand{\conj}[1]{\overline{#1}}

\newcommand{\fr}[1]{\{#1\}_{2\pi}}

\newcommand{\modl}[1]{\left\|{#1}\right\|}


\def\BibTeX{{\rm B\kern-.05em{\sc i\kern-.025em b}\kern-.08em
    T\kern-.1667em\lower.7ex\hbox{E}\kern-.125emX}}
\begin{document}

\title{Multiple Reachability in Linear Dynamical Systems%
  \thanks{Toghrul Karimov and Jo\"el Ouaknine were supported by DFG grant
389792660 as part of TRR 248 (see
\url{https://perspicuous-computing.science}). Jo\"el Ouaknine was
supported by the European Research Council under Grant Agreement
101167561 (ERC Synergy Grant DynAMiCs), and is also
affiliated with Keble College, Oxford as \texttt{emmy.network} Fellow. James
Worrell was supported by EPSRC Fellowship EP/N008197/1.}
}

\author{\IEEEauthorblockN{Toghrul Karimov}
\IEEEauthorblockA{\textit{MPI for Software Systems} \\
  Saarbr\"ucken, Germany \\
  toghs@mpi-sws.org}
\and
\IEEEauthorblockN{Edon Kelmendi}
\IEEEauthorblockA{\textit{Queen Mary University of London} \\
London, United Kingdom \\
e.kelmendi@qmul.ac.uk}
\and
\IEEEauthorblockN{Jo\"el Ouaknine}
\IEEEauthorblockA{\textit{MPI for Software Systems} \\
Saarbr\"ucken, Germany \\
joel@mpi-sws.org}
\and
\IEEEauthorblockN{James Worrell}
\IEEEauthorblockA{\textit{University of Oxford} \\
Oxford, United Kingdom \\
jbw@cs.ox.ac.uk}
}

\maketitle

\begin{abstract}
  We consider reachability problems for linear dynamical systems.  In dimension $d$ these problems are specified by respective semialgebraic sets $\sset S,\sset T\subseteq \mathbb{R}^d$ of source and target states and a matrix $M\in \mathbb{Q}^{d\times d}$.  The task is to determine whether there is a point in $\sset S$ whose orbit under $M$ intersects the target $\sset T$ in at least $m$ distinct points.  The case $m=1$ (mere reachability) can be reduced to mild generalisations of the Skolem and Positivity Problems for linear recurrence sequences, whose decidability has been open for many decades.  The situation is markedly different for \emph{multiple reachability}, where $m$ can be greater than one.  In this paper, we prove that multiple reachability is undecidable already in dimension~$d=10$ with fixed multiplicity $m=9$.  Since our undecidability construction also shows that decision procedures for dimension $d \in \{3,\ldots,9\}$ would entail significant new results on effective solutions of Diophantine equations, we subsequently focus on the case $d=2$, that is, multiple reachability in the plane.  Here we obtain two positive results. We show that multiple reachability is decidable if the matrix $M$ is a rotation and it is also decidable without restriction on $M$ for halfplane targets.  The former result relies on a theorem in arithmetic geometry, due to Bombieri and Zannier, concerning intersections of algebraic subgroups with subvarieties.


\end{abstract}

\begin{IEEEkeywords}
Linear dynamical systems, linear recurrence sequences, Skolem Problem
\end{IEEEkeywords}

\section{Introduction}
A \textbf{linear dynamical system} in ambient dimension $d$ is specified by a $d\times d$ matrix $M\in\rat^{d\times d}$ with rational entries. We are interested in understanding and deciding properties of the system's \textbf{orbit} for initial points $\vp\in\rat^d$, which is defined~as:
\begin{align*}
  \orbit\vp M \defeq \set{\vp\ M^n\st n\in\nat}.
\end{align*}
These are one of the simplest dynamical systems that we do not yet fully understand. They have been extensively studied for almost a hundred years. The motivations vary from finding solutions to Diophantine equations in number theory, to deciding linear loop termination in computer science, model checking simple programs \etc The text~\cite{recseq} is the principal introduction to linear dynamical systems, featuring the main theorems as well as a number of applications. See also~\cite{Karimov2022} for a recent survey. 

The core property we are interested in is \emph{reachability}: does the orbit reach some target set?  More precisely, a general phrasing of the \textbf{Reachability Problem} is the following. We are given respective source and target semialgebraic sets (defined by boolean combinations of polynomial inequalities) $\sset S, \sset T\subset \rel^d$, and a matrix $M\in \rat^{d\times d}$. The task is to decide whether there exists some point in the source set $\vp \in \sset S$, whose orbit under $M$ intersects the target set. In other words, does there exist $\vp \in \sset S$ such that
\begin{align*}
  \orbit \vp M \cap \sset T \ne\emptyset?
\end{align*}

A celebrated paper of Kannan and Lipton~\cite{kannan86_polyn_time_algor_orbit_probl} showed that point-to-point reachability (where both the source and target sets are singletons) is decidable in polynomial time, but for many variants of the Reachability Problem, decidability is open. Notably, point-to-hyperplane reachability (also known as Skolem's Problem) and point-to-halfspace reachability (also known as the Positivity Problem) have been studied extensively in relation to linear recurrence sequences, weighted automata, formal power series, model checking, and loop termination, but remain unsolved in general.  The current state of the art (see~\cite{shaullstacs}) is that the Reachability Problem is decidable in dimension $d=3$, Skolem's Problem is decidable in dimension $d=4$, and the Positivity Problem is decidable in dimension $d=5$.  In~\Cref{thm:reduction} we note that the Reachability Problem can be reduced to its point-to-polytope variant.  This last result suggests that the Skolem and Positivity Problems already capture much of the difficulty of the general (set-to-set) Reachability Problem.

In this paper we embark on a study of the \textbf{Multiple Reachability Problem}. This is a generalisation that does not merely ask whether the orbit intersects the target set, but rather whether it intersects it in at least $m$ points where $m\in\nat$ is part of the input. More precisely, we are given semialgebraic sets $\sset S, \sset T\subset \rel^d$, a matrix $M\in\rat^{d\times d}$, as well as a positive integer $m\in\nat$. The task is to decide if there is a point in the source set $\vp\in\sset S$ such that
\begin{align*}
  |\orbit \vp M \cap \sset T| \ge m.
\end{align*}

\begin{example}
  Here is a simple example:
  \begin{align*}
    \sset S &= \set{(x,y)\in\rel^2\st y=x^2},\\
    \sset T &= \set{(x,y)\in\rel^2\st x<y<-100},\\
    M&=\begin{pmatrix}
      2 & 0\\
      0 & -10/9
    \end{pmatrix},\text{ and } m = 5. 
  \end{align*}
  The answer to the multiple reachability problem for this instance is \emph{yes}. Since the linear map given by the matrix $M$ is particularly simple we can see the answer at once. Choose a point in $\sset S$ that is also in the second quadrant, \eg $\vp:=(-1,1)$. Observe that multiplication with $M^{2k+1}$, $k\in\nat$, sends $\vp$ to the fourth quadrant ($x<0$ and $y<0$), and the relation $x<y$ is invariant under this multiplication. Finally, from $\det M > 1$ it is clear that the orbit of $\vp$ under $M$ enters the target set $\sset T$ at least $m=5$ times. Indeed it enters the target infinitely often. 
\end{example}

What is the difference between the Reachability and Multiple Reachability problems? Our first observation is that, surprisingly, the Multiple Reachability Problem is computationally much more difficult than (single) Reachability.

\subsection{Contributions}
\begin{theorem}
  \label{thm:undecidable}
  The Multiple Reachability Problem is undecidable in general and is already undecidable in dimension $d=10$ with multiplicity $m=9$.
\end{theorem}
This is in stark contrast to the Reachability Problem---no natural variants of which are known to be undecidable and which, as remarked above, can be reduced to its point-to-polytope variant.

Intuitively, the lack of natural undecidable variants for reachability is because there is a single deterministic  rule that governs the dynamics of the system. In other words, these are programs without conditionals. In dynamical systems which have some non-determinism, \ie when the dynamics is governed by at least two maps, undecidable problems abound. For example, emptiness of probabilistic automata~\cite{gimbert-PA} can be seen as a point-to-halfspace reachability problem, but where we have at least two linear maps $M_1$, $M_2$ at our disposal, to move the point to the target. The choice between the two dynamics is used to simulate a Turing machine. We have to proceed differently for the proof of \Cref{thm:undecidable}. We reduce from a variant of Hilbert's tenth problem. The instances are encoded in the source set $\sset S\subseteq \rel^d$, so that points $\vp \in \sset S$ contain some \emph{real} zero of the given polynomial. Afterwards, the matrix $M$ is constructed in such a way that the orbit of $\vp$ under $M$ reaches some hyperplane if and only if the coordinates of $\vp$ are distinct natural numbers. This last step is made possible by the fact that every univariate polynomial of degree $d$ satisfies the same linear recurrence relation. In the reduction the matrix $M$ is not diagonalisable, and the proof would not work if we restricted $M$ to be diagonalisable. 

Hilbert's tenth problem is undecidable for $9$ variables, and consequently our reduction implies that multiple reachability with algebraic initial and hyperplane target sets is undecidable in dimension $d=19$ for fixed $m=9$.
Similarly, for semialgebraic initial and hyperplane target sets undecidability follows in dimension $d=10$. 
More generally, decidability of the Multiple Reachability Problem in dimension $d$ would give us algorithms to solve Diophantine equations in $d-1$ variables, which is open and considered very difficult already for $d=3$.
For $d\ge4$, it is conceivable that whether a solution exists might even be undecidable. Indeed, effectively solving Thue equations (homogeneous equations in two variables) was only possible after Baker's work on linear forms in logarithms in 1966; See, for example, \cite{waldschmidt2020thue}. 
Therefore, we focus our search for positive results on the two-dimensional affine plane~$\rel^2$. Here we show: 
\begin{theorem}
  \label{thm:pos1}
  In dimension $d=2$ the Multiple Reachability Problem is decidable
  (i)~when $\sset T$ is a halfspace (with $\sset S$ and $M$ arbitrary) or
  (ii)~when $M$ is a rotation (with $\sset S$ and $\sset T$ arbitrary).
\end{theorem}

\Cref{thm:pos1}(i) is proved using Kronecker's Theorem on Diophantine approximation and quantifier-elimination for the first-order theory of real-closed fields. \Cref{thm:pos1}(ii), is the main contribution of the present paper.

Most decidability results about linear dynamical systems are proved using Baker's effective bounds on linear forms in logarithms.  For the proof of \Cref{thm:pos1}(ii), we make crucial use of bounds, due to Bombieri and Zannier, on the height of algebraic points in the set of intersections between a variety and algebraic subgroups of low dimension.  To the best of our knowledge this is the first use of such tools in the analysis of linear dynamical systems, and it is intriguing that they are apparently needed to handle even special cases of multiple reachability in the plane.  The general case of the Multiple Reachability Problem in the plane remains open.

\subsection{\Cref{thm:pos1}(ii) Proof and Algorithm Overview}

We reduce to the following natural problem: Given a semialgebraic\footnote{Here we mean that the image of $T$ under the map $f \colon \com^k \to \rel^{2k}$ that extracts real and imaginary parts of coordinates is semialgebraic.} set $\sset T\subset\com^k$, and an algebraic number $\lambda$ with $|\lambda|=1$, decide whether the intersection of
\begin{align}
  \label{eq:intersection}
  \set{(\lambda^{x_1},\ldots, \lambda^{x_k})\st x_1\ldots, x_k\text{ distinct positive integers}}
\end{align}
with $\sset T$ is empty. To prove that this problem is decidable, we give a procedure for solving systems of polynomial (in)equalities in powers of an algebraic number $\lambda$ which is in the unit circle.

Every point in the set of powers of $\lambda$ in \eqref{eq:intersection} belongs to an algebraic group of dimension~1. Algebraic groups are algebraic sets, \ie solutions to a system of polynomial equations, that have a group structure. In our case the group operation is component-wise multiplication. Intuitively, the dimension is~1 because we have only one algebraic number $\lambda$ and it lies on the unit circle. On the other hand, the semialgebraic set $\sset T$ can be assumed to be the intersection of an algebraic set, or variety, $X$, and another open semialgebraic set that is specified as an intersection of strict polynomial inequalities. 

There are a number of conjectures and results, related to the Mordell-Lang Conjecture, that roughly say: if the intersection between an algebraic group of low dimension and a variety is \emph{large} then there must be some simple algebraic reason. See the book~\cite{zannier2012some} for an overview. The salient result for our purposes, due to Bombieri and Zannier, can be found in the appendix of~\cite{schinzel2000polynomials}. This theorem says that there is a partition of any variety $X$ into $X=X^\circ\cup X^\bullet$ such that the intersection of $X^\circ$ with the union of all groups of dimension~1 has bounded Weil height; moreover, inspecting the proof, one sees that the bound is effective. This upper bound directly translates to a bound on $x_i$ in the intersection \eqref{eq:intersection}. Further results by Bombieri, Schmidt, Zannier and others are used for computing the defining equations of the set~$X^\bullet$, which contains all solutions that are degenerate in some sense.

The algorithm computes the description of the subset $X^\bullet$ as the set of common zeros of finitely many polynomials, as well as a bound on the exponents $x_i$ of $\lambda$. 
It then checks finitely many tuples $(x_1,\ldots,x_k)$ to see whether they form a solution.
These tests use Tarski's algorithm for quantifier elimination in real closed fields as a subroutine. 

\subsection{Example}
The following while loop is a slightly more complex example, highlighting the connection to program analysis.\\

\begin{minipage}{0.5\textwidth}
    \begin{algorithmic}
    \State {\textbf{assume} $x^3+xy^2=2y^2$}
    \State {$m\gets 2$}
    \While {$m\ne 0$}\\
    \ \ \
\State{$\begin{pmatrix} x\\ y\end{pmatrix} \gets 
    \begin{pmatrix}
    4/5 & 3/5\\
    -3/5 & 4/5 
  \end{pmatrix}
  \begin{pmatrix} x\\ y\end{pmatrix}$}\\
     \If {$x=y+1$}
        \State $m\gets m-1$
      \EndIf
    \EndWhile
  \end{algorithmic}
  \vspace{0.2cm}
\end{minipage}
\begin{minipage}{0.47\textwidth}
    \fbox{\includegraphics[width=\textwidth]{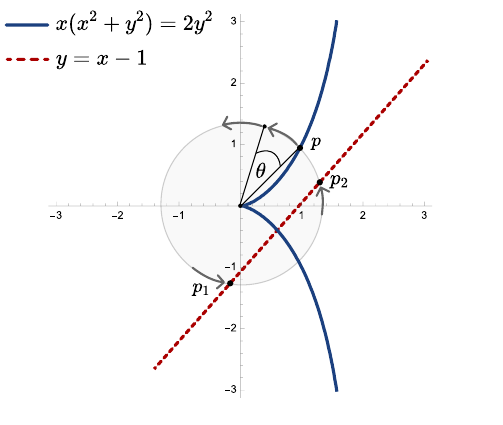}}
  \vspace{0.2cm}
\end{minipage}

Does this program terminate? More precisely, is there some initialisation of the variables $x,y\in\rel$ such that they satisfy the polynomial relation\footnote{This curve is the \emph{cissoid of Diocles}, discovered around 180 BC. See~\cite[Chapter~15]{lockwood1967book}.}  \begin{align}
  \label{eq:ex pol}
  x^3+xy^2=2y^2,
\end{align}
and for which the program terminates? Let us reinterpret this question as follows. First we notice that the vector $(x,y)$ is being updated with the matrix
\begin{align*}
  \begin{pmatrix}
    4/5 & 3/5\\
    -3/5 & 4/5 
  \end{pmatrix},
\end{align*}
which has the property that for all $n\in\nat$ and $\theta=-\cos^{-1}(4/5)$:
\begin{align*}
  \begin{pmatrix}
    4/5 & 3/5\\
    -3/5 & 4/5
  \end{pmatrix}^n=  \begin{pmatrix}
    \cos n\theta & -\sin n\theta\\
    \sin n\theta & \cos n\theta
  \end{pmatrix}.
\end{align*}
We see that with every loop iteration, the updates rotate the point $(x,y)$ by the angle $\theta$ on the affine plane. So the question of the termination of the program above is the question of whether there is some point $\vp$ in the cissoid defined in~\eqref{eq:ex pol}, that can be rotated into at least two points of the line $y=x-1$.

The algorithms  we present in this paper can be used to answer questions like the above (and more). In this example, the answer turns out to be negative; there are no points in the cissoid that can be rotated by $\theta$ to two different points on the line. Therefore, if the variables $x,y$ are initialised such that they satisfy the polynomial \eqref{eq:ex pol}, the while loop, above, does not terminate. 

\subsection{Related Work}
Effective procedures for reachability in linear dynamical systems have been investigated for a long time. There are various partial results. A brief survey of the state of the art can be found in~\cite{Karimov2022}.

Directly related to the present paper, the semialgebraic-to-semi\-algebraic (single) reachability problem was studied in~\cite{shaullstacs}. There, this decision problem is shown decidable when the dimension is $3$, using Baker's effective estimates. Furthermore, \cite{shaullstacs} shows by way of hardness that an algorithm for deciding this problem in dimension $4$ would entail the ability to effectively estimate Lagrange constants of certain transcendental numbers. The proof of \Cref{thm:reduction} appears implicitly in~\cite[Theorem~11]{shaullstacs}.

More closely related to \emph{multiple} reachability is the question of multiplicity in linear recurrence sequences. A consequence of the Skolem-Mahler-Lech theorem is that for any integer $k$, and any non-degenerate linear recurrence sequence $\sq{u_n}$, there are only finitely many $n$ for which $u_n=k$. Thus one can ask what is the largest  number
solutions $n$ of such an equation when $\sq{u_n}$ ranges over non-degenerate linear recurrence sequences of a certain order. Equivalently, what is the largest number of times a non-degenerate linear dynamical system from a singleton source hits a hyperplane target? There are many interesting and deep answers to this question, see \cite[Chapter 2.2]{recseq} and references therein. 

The questions that we consider in this paper are generalisations of the Skolem Problem. There is another interesting generalisation in a different direction, which happens to be undecidable for nontrivial reasons. Namely, given $k$ linear recurrence sequences over algebraic numbers
\begin{align*}
  \sq{u_n^{(1)}}, \sq{u_n^{(2)}},\ldots, \sq{u_n^{(k)}}, 
\end{align*}
we are asked to decide whether there are natural numbers $n_1,\ldots, n_k$ such that
\begin{align*}
  u^{(1)}_{n_1}+u_{n_2}^{(2)}+\cdots+u_{n_k}^{(k)}=0. 
\end{align*}
This problem was conjectured to be undecidable by Cerlienco, Mignotte, and Piras in \cite{cerlienco1987linear}. The conjecture was proved by Derksen and Masser a few years ago in \cite{derksenmasser15}, for $k=557844$. Similarly to the present paper, they reduce from Hilbert's tenth problem, and their proof requires that the sequences not be diagonalisable.

\section{Definitions and Basic Properties}
We define the natural numbers as the set $\nat = \set{1,2,3,\ldots}$. Atomic formulas of the \textbf{first-order logic of reals} are propositions of the type:
\begin{align*}
  P(x_1,\ldots,x_n)>0,
\end{align*}
where $x_1,\ldots, x_n$ are first-order variables ranging over $\rel$, and $P\in\intg[x_1,\ldots,x_n]$ is a polynomial with integer coefficients. Atomic propositions can be combined with Boolean connectives, and we can also quantify over the set of real numbers. This logic admits effective quantifier elimination via Tarski's algorithm~\cite{tarski1951decision}. This means that given a formula:
\begin{align*}
  \exists x_0\ \Phi(x_0,x_1,\ldots,x_n),
\end{align*}
there is an equivalent quantifier-free formula $\Gamma(x_1,\ldots,x_n)$ that can be effectively computed. In particular, given a sentence (\ie a formula with no free variables), Tarski's procedure can be used to decide whether the sentence is true over $\mathbb R$.

Subsets $\sset S\subseteq\rel^d$ that can be expressed using formulas in the logic described above, that is
\begin{align*}
  \sset S=\set{(x_1,\ldots, x_d)\in\rel^d\st \Phi(x_1,\ldots,x_d)},
\end{align*}
for some formula $\Phi$, are called \textbf{semialgebraic}. Due to quantifier elimination, semialgebraic sets are exactly the sets $\sset S\subseteq \rel^d$ that can be written as finite unions of sets of tuples $(x_1,\ldots,x_d)\in\rel^d$ that satisfy simultaneously
\begin{align}
  \label{eq:sa sets}
  \begin{cases}
    P_0(x_1,\ldots,x_d)&=0,\\
    P_1(x_1,\ldots,x_d)&>0,\\
    \vdots\\
    P_k(x_1,\ldots,x_d)&>0,
  \end{cases}
\end{align}
where $P_i\in\intg[x_1,\ldots,x_d]$. We only need one equality because the intersection of real zeros of polynomials $P$ and $Q$ is exactly the set of real zeros of the polynomial $P^2+Q^2$. In this setting, an \textbf{algebraic set} is the set of zeros of a polynomial with integer coefficients. A \textbf{hyperplane} is the set of solutions of an affine equation, \ie $(x_1,\ldots, x_d)\in\rel^d$ for which
\begin{align*}
  a_1x_1+\cdots+a_dx_d+a_{d+1}=0,
\end{align*}
where $a_i$ are integers. A \textbf{halfspace} is the set of solutions of an affine \emph{inequality}, and a \textbf{polytope} is the intersection of finitely many halfspaces. On~$\rel^2$, a hyperplane is just a \textbf{line}, and a halfspace is called a \textbf{halfplane}.
Finally, when discussing semialgebraicity for subsets of $\com^d$, we identify the latter with $\rel^{2d}$ by taking real and imaginary parts.

A \textbf{linear recurrence sequence} is a sequence $\sq{u_n}$ of rational numbers that satisfies a linear recurrence relation
\begin{align}
  \label{eq:lrs}
  u_n=a_1u_{n-1}+\cdots + a_du_{n-d},
\end{align}
for all $n>d$, where $a_i$ are rational numbers. The smallest positive number $d$ for which the sequence satisfies \eqref{eq:lrs} is called the \textbf{order} of the sequence.
A \textbf{linear dynamical system}  evolves according to the map $x \mapsto Mx$ for $M \in \rat^{d\times d}$.
Linear recurrence sequences and linear dynamical systems are essentially the same object, as summarised in the two following propositions.
\begin{proposition}
  \label{prop:lrs to mat}
  Let $\sq{u_n}$ be a linear recurrence sequence of order $d$. Then there exists $M\in\rat^{d\times d}$ such that
  \begin{align*}
    u_n=(M^n)_{1,d}\,\,\, \text{for all $n\in\nat$.}
  \end{align*}
\end{proposition}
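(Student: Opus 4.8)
The plan is the standard companion-matrix realisation. Write the order-$d$ recurrence as $u_n = a_1 u_{n-1} + \cdots + a_d u_{n-d}$ for $n > d$, and let
\[
  C = \begin{pmatrix}
    0 & 1 & 0 & \cdots & 0\\
    0 & 0 & 1 & \cdots & 0\\
    \vdots & & & \ddots & \vdots\\
    0 & 0 & 0 & \cdots & 1\\
    a_d & a_{d-1} & a_{d-2} & \cdots & a_1
  \end{pmatrix} \in \rat^{d\times d},
\]
whose characteristic polynomial is $x^d - a_1 x^{d-1} - \cdots - a_d$. The first step is the routine verification that $C$ implements the one-step shift on length-$d$ windows of the sequence: writing $\vec w_n := (u_n, u_{n+1}, \ldots, u_{n+d-1})^{\top}$, the recurrence says precisely that $C\vec w_n = \vec w_{n+1}$ (the bottom row is the recurrence at index $n+d$, which is valid for every $n\in\nat$). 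Hence $\vec w_n = C^{n-1}\vec w_1$, and reading off the first coordinate gives $u_n = \vec e_1^{\top} C^{n-1}\vec w_1$ for all $n\in\nat$, with $\vec w_1 = (u_1,\ldots,u_d)^{\top}$.

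This already realises $\sq{u_n}$ through the powers of a single rational matrix, but in the bilinear shape $\vec c^{\top}M^{n-1}\vec b$ rather than as a single designated entry $(M^n)_{1,d}$. The second step is to absorb $\vec e_1$, $\vec w_1$ and the index shift into the matrix by a change of basis $M := PCP^{-1}$. Here I would exploit that the sequence has order \emph{exactly} $d$: by classical Hankel-determinant theory this makes the matrix $\big(u_{i+j-1}\big)_{1\le i,j\le d}$ nonsingular, and its rows are precisely $\vec w_1^{\top}, (C\vec w_1)^{\top}, \ldots, (C^{d-1}\vec w_1)^{\top}$; equivalently, $\vec w_1$ is a cyclic vector for $C$. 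One then chooses the invertible $P$ so that the row $\vec e_1^{\top}P$ and the column $P^{-1}\vec e_d$ are exactly the data needed to recover $\vec e_1^{\top}C^{n-1}\vec w_1$ from $(PCP^{-1})^n$ at position $(1,d)$, and checks the resulting identity. The converse direction — that $(M^n)_{1,d}$ is always a linear recurrence sequence of order at most $d$ — is immediate from Cayley--Hamilton applied to $M$.

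The genuinely delicate point is this last piece of bookkeeping. Since the distinguished position is the off-diagonal corner $(1,d)$ and the exponent appearing on the left is $C^{n-1}$ rather than $C^{n}$, the row $\vec e_1^{\top}P$ and the column $P^{-1}\vec e_d$ cannot be prescribed independently: they are constrained by $(\vec e_1^{\top}P)(P^{-1}\vec e_d) = (PP^{-1})_{1,d} = 0$, which corresponds exactly to the requirement that $\sq{u_n}$, continued one step to the left via the recurrence, vanish there. Arranging this cleanly — together with separately handling the degenerate case $a_d = 0$, in which $C$ is singular and one should first pass to a shifted or slightly enlarged realisation before conjugating — is where essentially all the care lies; everything else reduces to elementary linear algebra.
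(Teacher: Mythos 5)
Your first paragraph is correct and is the standard route: $C\vec w_n=\vec w_{n+1}$, hence $u_n=\vec e_1^{\top}C^{n-1}\vec w_1$. The gap is in the second step. The constraint you flag at the end, $(\vec e_1^{\top}P)(P^{-1}\vec e_d)=(PP^{-1})_{1,d}=0$, is not bookkeeping that care can arrange; it is a genuine obstruction, and in fact the proposition is false as literally stated with a $d\times d$ matrix. Concretely, take $u_n=2^n+3^n$, of order $d=2$. Any $M\in\rat^{2\times 2}$ satisfies $M^n=tM^{n-1}-\delta M^{n-2}$ for $n\ge 2$ by Cayley--Hamilton, so $v_n:=(M^n)_{1,2}$ obeys this recurrence with $v_0=(M^0)_{1,2}=0$. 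If $v_n=u_n$ for all $n\ge 1$, then comparing recurrences on $n=3,4$ (legitimate since $(u_1,u_2)=(5,13)$ and $(u_2,u_3)=(13,35)$ are linearly independent) forces $t=5$, $\delta=6$, and then $n=2$ gives $u_2=5u_1-6\cdot 0$, i.e.\ $13=25$. The same computation in general shows that any $d\times d$ realisation of an order-$d$ sequence at position $(1,d)$ forces the backward extension $u_0$ to vanish --- exactly the condition you identified --- and this fails for most sequences. No choice of $P$, and indeed no choice of $M$ whether similar to $C$ or not, circumvents it; so the "delicate bookkeeping" you defer is impossible to carry out.

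The statement that is true (and all that is needed downstream, in the proof of Theorem~\ref{thm:reduction}) replaces $d$ by $d+1$: pad your companion realisation into the $(d+1)\times(d+1)$ matrix
\begin{align*}
  M=\begin{pmatrix} C & \vec w_1\\ 0 & 0\end{pmatrix},
  \qquad\text{so that}\qquad
  M^n=\begin{pmatrix} C^n & C^{n-1}\vec w_1\\ 0 & 0\end{pmatrix}
  \ \text{ for } n\ge 1,
\end{align*}
whence $(M^n)_{1,d+1}=\vec e_1^{\top}C^{n-1}\vec w_1=u_n$ for all $n\in\nat$. This absorbs the index shift without any conjugation and makes your Hankel/cyclic-vector discussion and the case $a_d=0$ unnecessary. (The paper gives no proof, deferring to the literature; the dimension in its statement should be read as an off-by-one slip rather than something you can establish as written.)
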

\begin{proposition}
  \label{prop:mat to lrs}
  Let $M\in\rat^{d\times d}$ and $1\le i,j\le d$. Then
  \begin{align*}
    \sq{(M^n)_{i,j}}
  \end{align*}
  is a linear recurrence sequence of order at most $d$. 
\end{proposition}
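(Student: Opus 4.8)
The plan is to invoke the Cayley--Hamilton theorem. First I would form the characteristic polynomial
\begin{align*}
  \chi_M(x) = \det(xI - M) = x^d - c_1 x^{d-1} - \cdots - c_d ;
\end{align*}
expanding the determinant shows that each $c_k$ is a polynomial in the entries of $M$, hence $c_k \in \rat$. By Cayley--Hamilton, $\chi_M(M) = 0$, that is, $M^d = c_1 M^{d-1} + \cdots + c_d I$. Multiplying both sides by $M^{n-d}$, which is legitimate for every $n \ge d$ (with $M^0 = I$), gives
\begin{align*}
  M^n = c_1 M^{n-1} + \cdots + c_d M^{n-d} \qquad \text{for all } n \ge d .
\end{align*}

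Next I would read off the $(i,j)$-entry of this matrix identity, obtaining $(M^n)_{i,j} = c_1 (M^{n-1})_{i,j} + \cdots + c_d (M^{n-d})_{i,j}$ for all $n \ge d$, and a fortiori for all $n > d$. This is exactly a relation of the shape \eqref{eq:lrs} with $d$ terms and rational coefficients $c_1,\ldots,c_d$; moreover each term $(M^n)_{i,j}$ is rational, since it is built from the rational entries of $M$ by sums and products. Hence $\sq{(M^n)_{i,j}}$ is a linear recurrence sequence, and, having exhibited a valid recurrence of depth $d$, its order (the least valid depth) is at most $d$.

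I do not anticipate a genuine obstacle here; the argument is entirely routine. The only point needing a moment's care is the assertion about the \emph{order}: it is immediate from the definition, since we need only the existence of some recurrence of depth at most $d$, not minimality (one may always pad with zero coefficients to reach depth exactly $d$). One could alternatively argue directly from linear algebra --- the row vectors $\vec e_i^\top, \vec e_i^\top M, \ldots, \vec e_i^\top M^d$ lie in the $d$-dimensional space $\rat^{1 \times d}$ and are therefore linearly dependent, which after choosing a maximal independent initial segment and applying powers of $M$ yields a recurrence of depth at most $d$ for the $j$-th coordinate. I would nonetheless prefer the Cayley--Hamilton route, as it produces a single recurrence valid simultaneously for all choices of $i$ and $j$.
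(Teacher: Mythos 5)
Your proof is correct and follows the same route as the paper, which simply cites the Cayley--Hamilton theorem for this proposition. Forming the characteristic polynomial, applying Cayley--Hamilton, multiplying by $M^{n-d}$, and reading off the $(i,j)$-entry is exactly the intended argument.
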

\noindent The proof of \Cref{prop:lrs to mat} is elementary, and \Cref{prop:mat to lrs} follows from the Cayley-Hamilton theorem; See \cite[Chapter 1]{recseq} for more details. Furthermore, both propositions are effective. 

The \textbf{characteristic polynomial} of the linear recurrence \eqref{eq:lrs} is
\begin{align*}
  x^d-a_1x^{d-1}-a_2x^{d-2}-\cdots -a_d.
\end{align*}
Denote by $\Lambda_1,\ldots ,\Lambda_k$ the distinct roots of this polynomial and by $m_1,\ldots,m_k$ their respective multiplicities. A linear recurrence sequence $\sq{u_n}$ can also be written as a \textbf{generalized power sum}, which is an expression of the form
\begin{align*}
  u_n=\sum_{i=1}^kP_i(n)\ \Lambda_i^n,
\end{align*}
where $P_i\in\alg[n]$ are polynomials of degree at most $m_i-1$. Furthermore, all generalized power sums satisfy linear recurrence relations with algebraic coefficients. A consequence of this fact is that linear recurrence sequences are closed under addition and product: if $\sq{u_n}$ and $\sq{v_n}$ are two linear recurrence sequences, then so are the sequences $\sq{u_n+v_n}$ and $\sq{u_n\cdot v_n}$.

These are all the necessary facts required to prove the following:
\begin{theorem}
  \label{thm:reduction}
  The general Reachability Problem reduces to the point-to-polytope variant. 
\end{theorem}

The main idea appears implicitly in the proof of~\cite[Theorem~11]{shaullstacs}.

\begin{proof}
  Suppose that we are given an instance of the semialgebraic to semialgebraic reachability problem. Let $d\in\nat$ be the dimension of its ambient space, $\sset S,\sset T\subseteq \rel^d$ be the source and target sets respectively, and $M$ be the given matrix. Denote by $\Phi_{\sset S}$, $\Phi_{\sset T}$, the formulas  defining the respective sets $\sset S, \sset T$. Write $\vec x$ for the tuple of variables $(x_1,\ldots, x_d)$ and $A$ for the $d \times d$ matrix of variables $(A_{1,1},\ldots,A_{d,d})$, and define the formula
  \begin{align*}
    \Gamma(\vec x, A)\defeq \Phi_{\sset S}(\vec x) \wedge \Phi_{\sset T}(\vec x \cdot A).
  \end{align*}
  The reachability problem asks whether there exists $\vp\in\rel^d$ and $n\in\nat$ such that $\Gamma(\vp, M^n)$ holds. Since the first-order theory of reals admits effective quantifier elimination, we first use Tarski's algorithm to produce a quantifier-free formula $\Gamma'(A)$, which is equivalent to the projection $\exists \vec x\ \Gamma(\vec x, A)$. Now the reachability problem is equivalent to the question of whether there exists some $n$ such that $\Gamma'(M^n)$ holds.  Since $\Gamma'$ is quantifier-free, it can be written as a disjunction of formulas $\varphi_1,\ldots, \varphi_m$, for some $m\in\nat$, such that each $\varphi_i$ is of the form \eqref{eq:sa sets}. 
  It suffices to construct, for each $\varphi_i$, an instance of the point-to-polytope reachability problem with the property that $\varphi_i(M^n)$ holds for some $n$ if and only if the respective polytope is reached from some point in~$\sset{S}$. 
  We can then take the union of these polytopes as the single polytopic target.
  Let $\varphi$ be one of the disjuncts, written in the form
  \begin{align*}\bigwedge
    \begin{cases}
      P_0(A_{1,1},\ldots,A_{d,d})&=0,\\
      P_1(A_{1,1},\ldots,A_{d,d})&>0,\\
      \vdots\\
      P_k(A_{1,1},\ldots,A_{d,d})&>0.
    \end{cases}
  \end{align*}
  Define for $i\in\set{0,\ldots,k}$ the sequences
  \begin{align*}
    u_{i,n}\defeq P_i\left ( (M^n)_{1,1},\ldots,(M^n)_{d,d} \right ) ,\ n\in\nat. 
  \end{align*}
  It follows from \Cref{prop:mat to lrs} and the closure of linear recurrence sequences under component-wise addition and multiplication, that all the sequences $\sq{u_{i,n}}$ are themselves linear recurrence sequences.
  Write $d_i$ for the order of $\sq{u_{i,n}}$.
  Applying \Cref{prop:lrs to mat} we construct matrices $N_i$ of size $d_i\times d_i$ for $0\le i\le k$, with the property that the upper-right corner of $N_i^n$ is equal to $u_{i,n}$.

  Unravelling the definitions, we see that for all $n\in\nat$, $\varphi(M^n)$ holds if and only if the upper-right corner of $N_0^n$ is $0$, and the upper-right corners of $N_i^n$, $1\le i\le k$ are strictly positive. The latter can be interpreted as a point-to-polytope reachability problem as follows. Let $D:=\sum d_i$, and construct a block diagonal matrix whose blocks are $N_0,\ldots, N_k$, and whose size is $D\times D$. Then the equivalent instance of the point-to-polytope problem has as initial point $\vp_0:=(1,\ldots,1)\in\rel^D$, the matrix is $N$ and the polytope is the intersection of the following halfspaces. The closed halfspaces are characterised by the normal vectors $\Delta(d_0)$ and $-\Delta(d_0)$ (where by $\Delta(i)\in\rel^D$ we denote the vector whose components are all zero except the component in position $i$ whose value is $1$), and the open halfspaces with normal vectors $\Delta(d_1),\ldots, \Delta(d_k)$. 
\end{proof}
Why does a similar proof not work for multiple reachability? The critical difference occurs after we obtain the projection $\Gamma'$. If there are two distinct integers $n_1,n_2$ such that $\Gamma'(M^{n_1})$ and $\Gamma'(M^{n_2})$ hold, it does not necessarily mean that there is a \emph{single} $\vp$ for which both $\Gamma(\vp,M^{n_1})$ and $\Gamma(\vp,M^{n_2})$ hold. Indeed, it is unlikely that such a reduction is possible for multiple reachability, in light of the result of the next section.

\section{Hilbert's Tenth Problem and Linear Dynamical Systems}
In this section we prove the undecidability of the multiple reachability problem, with algebraic starting sets and hyperplane targets, by reducing from a variant of Hilbert's tenth problem.\footnote{A sketch of this proof has already appeared in \cite{Karimov2022}.} The variant that we reduce from is the following:

\begin{problem}
  \label{prob:hilbert}
  Given a polynomial $P(x_1,\ldots, x_k)$ with integer coefficients, decide whether there are distinct positive integers $n_1,n_2,\ldots, n_k$ such that
  \begin{align*}
    P(n_1,\ldots, n_k)=0.
  \end{align*}
\end{problem}
\begin{proposition}
  \label{prop:hilbert}
  \Cref{prob:hilbert} is undecidable. 
\end{proposition}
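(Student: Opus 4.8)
The plan is to reduce the classical (unrestricted) Hilbert's Tenth Problem—known to be undecidable—to the ``distinct positive integers'' variant in Problem~\ref{prob:hilbert}. Given a polynomial $Q(y_1,\dots,y_k) \in \intg[y_1,\dots,y_k]$ for which we want to decide solvability over $\nat$ (or over $\intg$), I would build a new polynomial $P$ in more variables such that $P$ has a zero at some tuple of \emph{pairwise distinct} positive integers if and only if $Q$ has a zero over the originally intended domain. The two standard moves that accomplish this are (i) encoding each integer/natural variable $y_i$ as a difference or combination of the new variables so as to recover the full range of $\intg$ or $\nat$, and (ii) forcing distinctness by a shift trick.

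For step (i): if the target domain is $\nat$, one can already use nonnegative integers, but to go from ``positive'' to ``nonnegative'' simply substitute $y_i \mapsto z_i - 1$; to reach all of $\intg$, use the four-squares trick $y_i = a_i^2 + b_i^2 + c_i^2 + d_i^2 - e_i^2 - f_i^2 - \cdots$ or more economically $y_i = a_i - b_i$ with $a_i,b_i$ ranging over positive integers. For step (ii), the key observation is that if $R(z_1,\dots,z_N)$ is any polynomial whose solvability over positive integers we wish to decide, then introducing a block of ``spacing'' variables and replacing $z_j$ by $z_j + c_j \cdot W$ for a common large auxiliary variable $W$ and distinct constants $c_j$—or, more simply, appending fresh variables $w_1 < w_2 < \cdots$ via a gadget like $\sum (w_{j+1} - w_j - s_j^2 - 1)^2 = 0$ (which forces $w_{j+1} > w_j$), and then rescaling the original variables to live in disjoint arithmetic windows—lets one arrange that \emph{all} variables of $P$ are automatically pairwise distinct on any solution. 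Concretely, I would collect all variables of the intermediate system into one list, then substitute the $i$-th variable $v_i$ by $v_i' + i\cdot T$ where $T$ is a new variable constrained (by squared-difference equations summed to zero) to exceed every $v_i'$; since $0 \le v_i' < T$, the values $v_i' + iT$ are forced into disjoint intervals $[iT,(i+1)T)$ and hence are distinct, and are all positive once we also shift by $1$. Finally fold the whole conjunction of polynomial equations into a single polynomial $P$ using the sum-of-squares trick ($A=0 \wedge B=0 \iff A^2+B^2=0$), exactly as recalled in the paragraph around \eqref{eq:sa sets} of the excerpt.

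The correctness argument then has two directions, both routine: a zero of $Q$ over the intended domain lifts (via the chosen encodings) to a zero of $P$ at distinct positive integers by choosing $T$ large enough; conversely any zero of $P$ at distinct positive integers, by construction of the squared-difference constraints, satisfies all the intermediate equations and in particular the four-squares / difference decoding of the original variables, hence projects down to a genuine zero of $Q$. Since the construction of $P$ from $Q$ is plainly computable, an algorithm for Problem~\ref{prob:hilbert} would yield one for Hilbert's Tenth Problem, a contradiction.

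I expect the only real subtlety—it is not deep, but it is where care is needed—to be bookkeeping the windowing substitution so that the distinctness of \emph{every} variable of $P$ (including the auxiliary spacing variables $T$, the square-decomposition variables, and the shifted originals) is genuinely enforced and not merely the distinctness of the variables we care about; one must make sure the auxiliary variables cannot accidentally collide with each other or with a shifted original. This is handled by giving each group of variables its own disjoint range of windows (e.g. assigning window index $i$ uniformly across a single flat list of all variables), so that the ranges never overlap by design. Everything else is a direct application of the standard reductions for Diophantine definability together with the sum-of-squares encoding already used in the paper.
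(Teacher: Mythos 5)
Your overall strategy---a many-one reduction from Hilbert's tenth problem that introduces slack variables so that distinctness can always be arranged---is viable, but the concrete mechanism you rely on for distinctness does not work, for two separate reasons. First, \Cref{prob:hilbert} asks for the \emph{variables of $P$} to take pairwise distinct values, whereas your substitution $v_i\mapsto v_i'+iT$ only places the decoded composite values $v_i'+iT$ in disjoint intervals; the actual variables of the new polynomial are $v_1',\ldots,v_N',T$ together with the auxiliaries enforcing $T>v_i'$, and nothing prevents, say, $v_1'=v_2'$. Your proposed remedy (windowing ``a single flat list of all variables'') is circular: each windowed variable is replaced by a fresh base variable, and it is precisely these base variables whose distinctness is at issue. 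Second, and independently, the forward direction fails. Given a solution $(\hat v_1,\ldots,\hat v_N)$ of the original system, you must exhibit a single $T$ with $\hat v_i\in[iT,(i+1)T)$ for \emph{every} $i$; such a $T$ need not exist (for instance $\hat v_1=\hat v_2=5$ forces $T\in\{3,4,5\}$ from the first window and $T=2$ from the second). The values $\hat v_i$ come from an arbitrary solution of $Q$ and are not yours to choose, so you cannot place them into prescribed windows.

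The repair is already hiding in your step (i). Use a difference encoding with one genuinely free parameter per original variable---say $y_i=a_i-2b_i$ with $a_i,b_i$ ranging over positive integers---and reduce from Hilbert's tenth problem over $\intg$, which is equally undecidable. Then $P:=Q(a_1-2b_1,\ldots,a_k-2b_k)$ has the property that any zero of $P$ in positive integers (distinct or not) yields an integer zero of $Q$, and conversely, given an integer zero $(y_1,\ldots,y_k)$ of $Q$, one chooses $b_1,\ldots,b_k$ greedily: each choice fixes $a_i=y_i+2b_i$, and only finitely many values of $b_i$ cause a collision among the $2k$ numbers, so a zero of $P$ in \emph{distinct} positive integers exists. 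No windowing, inequality gadgets or sum-of-squares folding are needed. For comparison, the paper's own argument avoids auxiliary variables entirely: it observes that $Q$ has a zero in positive integers if and only if one of the finitely many polynomials obtained by identifying variables according to a coincidence pattern has a zero in \emph{distinct} positive integers, which yields a finite disjunction of instances of \Cref{prob:hilbert} rather than a single one; this also keeps the number of variables at most $k$, which is what allows the fixed-$k$ conclusion in \Cref{prop:hilbert2}, whereas the difference encoding doubles it.
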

\begin{proof}
  Let $Q(x_1,\ldots, x_n)$ be an arbitrary polynomial with integer coefficients. For any partition $\mathcal P$ of $\set{1,\ldots,n}$, define $Q_{\mathcal P}$ to be the polynomial obtained by taking $Q$ and for every $A\in \mathcal P$ replacing all variables $x_i$, for $i\in A$, by a single fresh variable. Clearly $Q$ has a zero in positive integers $x_1,\ldots, x_n$ if and only if one of the polynomials $Q_{\mathcal P}$ has a zero in \emph{distinct} positive integers. Since Hilbert's tenth problem is undecidable (\ie there is no procedure that can decide whether a given polynomial has a zero in positive integers, see \cite[Chapter~5]{matijasevic}), it follows that \Cref{prob:hilbert} is also undecidable. 
\end{proof}
Hilbert's tenth problem is known to be undecidable even when the number of variables is fixed, equal to $9$. As a consequence of the proof above we have the following corollary.
\begin{proposition}[\cite{jones}]
  \label{prop:hilbert2}
  \Cref{prob:hilbert} is undecidable for fixed $k=9$. 
\end{proposition}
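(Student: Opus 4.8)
The plan is to combine Jones's theorem---that Hilbert's tenth problem remains undecidable for polynomials in a fixed number, namely $9$, of variables \cite{jones}---with the reduction already used to prove \Cref{prop:hilbert}, taking care to control the number of variables produced. So I would start from an arbitrary polynomial $Q(x_1,\ldots,x_9)$ with integer coefficients, viewed as an instance of the $9$-variable Hilbert's tenth problem, and form the $2^9$ polynomials $Q_A$, $A\subseteq\set{1,\ldots,9}$, exactly as in the proof of \Cref{prop:hilbert}: $Q_A$ is obtained from $Q$ by renaming every variable indexed by $A$ to a single fresh variable. Each $Q_A$ then involves at most $9$ variables (exactly $9$ when $\val{A}\le 1$, and $10-\val{A}\le 8$ when $\val{A}\ge 2$).

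The only point requiring any care is that we must reformulate each $Q_A$ as a genuine instance of \Cref{prob:hilbert} with $k=9$, i.e.\ as a polynomial in exactly $9$ variables. For this I would simply pad: if $Q_A$ uses only the variables $y_1,\ldots,y_j$ with $j<9$, set $\widehat Q_A(y_1,\ldots,y_9)\defeq Q_A(y_1,\ldots,y_j)$, regarded as a polynomial in the $9$ variables $y_1,\ldots,y_9$ (with $y_{j+1},\ldots,y_9$ not occurring). I claim that $\widehat Q_A$ has a zero in $9$ distinct positive integers if and only if $Q_A$ has a zero in $j$ distinct positive integers. The forward direction is immediate by restriction; for the converse, given a solution $(n_1,\ldots,n_j)$ of $Q_A$ in distinct positive integers, one extends it to $(n_1,\ldots,n_9)$ by choosing $n_{j+1},\ldots,n_9$ to be any positive integers that are pairwise distinct and distinct from $n_1,\ldots,n_j$---possible since $\nat$ is infinite---and this extended tuple is a solution of $\widehat Q_A$ in $9$ distinct positive integers.

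Finally I would assemble the reduction. By the equivalence recorded in the proof of \Cref{prop:hilbert}, together with the padding claim, $Q$ has a zero in positive integers if and only if at least one of the finitely many polynomials $\widehat Q_A$ has a zero in $9$ distinct positive integers. Hence a decision procedure for \Cref{prob:hilbert} with $k=9$ would, by running it on each $\widehat Q_A$ and taking the disjunction of the answers (a Turing reduction with $2^9$ oracle calls), decide whether an arbitrary $9$-variable polynomial has a positive-integer zero, contradicting \cite{jones}. I do not anticipate a genuine obstacle here: the substance lies entirely in Jones's theorem, and the work specific to the present statement is the bookkeeping needed to keep the variable count from dropping below $9$---that is, the padding step and the observation that it preserves solvability in distinct positive integers.
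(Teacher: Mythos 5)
Your proposal is correct and follows essentially the same route as the paper, which simply observes that the reduction in the proof of \Cref{prop:hilbert} applied to Jones's $9$-variable undecidability result yields the claim. The only addition is your padding step ensuring each $Q_A$ is presented with exactly $9$ variables, a bookkeeping detail the paper leaves implicit but which you handle correctly.
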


We will now show that \Cref{prob:hilbert} can be reduced to the multiple reachability problem. This comprises two steps. First we prove that all univariate polynomials of degree $d$ satisfy the same linear recurrence relation, which is then turned into a matrix form. In the second step we construct a certain algebraic set from the polynomial of \Cref{prob:hilbert}.

\begin{lemma}
  \label{cor:recurrence}
  Let $P$ be a univariate polynomial of degree $d$. The unique sequence
$\sq{v_n}$
  that satisfies the recurrence
  \begin{align}
    \label{eq:recurrence}
    \sum_{i=0}^{d+1}(-1)^i{d+1\choose i}v_{n-i}=0,\qquad n>d+1. 
  \end{align}
  and whose first $d+1$ entries are $P(1),P(2),\ldots,P(d+1)$ is the sequence
  \begin{align*}
    \sq{P(n)}.
  \end{align*}
\end{lemma}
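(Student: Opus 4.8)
The plan is to prove this by a standard finite-difference argument. Recall that the forward difference operator $\Delta$ acts on sequences by $(\Delta v)_n = v_{n+1} - v_n$, and that applying $\Delta$ lowers the degree of a polynomial by exactly one: if $P$ has degree $d$, then $P(n+1) - P(n)$ is a polynomial in $n$ of degree $d-1$. Iterating, $\Delta^{d+1}$ annihilates every polynomial of degree $d$. The key combinatorial identity is that the $(d+1)$-st finite difference expands as
\begin{align*}
  (\Delta^{d+1} v)_n = \sum_{i=0}^{d+1} (-1)^i \binom{d+1}{i} v_{n+d+1-i},
\end{align*}
which after reindexing $n \mapsto n - (d+1)$ is exactly the left-hand side of~\eqref{eq:recurrence}. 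So the recurrence~\eqref{eq:recurrence} is precisely the statement $\Delta^{d+1} v \equiv 0$.

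First I would record the two ingredients above: (i) the binomial expansion of $\Delta^{d+1}$, proved by induction on the exponent using the Pascal-type identity $\binom{d+1}{i} = \binom{d}{i} + \binom{d}{i-1}$; and (ii) the fact that $\Delta$ maps polynomials of degree $d$ to polynomials of degree $d-1$ (leading coefficient $a_d$ goes to $d\,a_d$), so that $\Delta^{d+1} P = 0$ for any $P$ of degree $\le d$. Together these show that the sequence $\sq{P(n)}$ satisfies~\eqref{eq:recurrence}.

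Next I would argue uniqueness. The recurrence~\eqref{eq:recurrence} has the form $v_n = \sum_{i=1}^{d+1}(-1)^{i+1}\binom{d+1}{i} v_{n-i}$ for $n > d+1$ (note the $i=0$ term has coefficient $1$, so one can solve for $v_n$), hence it is a genuine order-$(d+1)$ linear recurrence: the values $v_1, \ldots, v_{d+1}$ determine the entire sequence. Since $\sq{P(n)}$ satisfies the recurrence and has initial values $P(1), \ldots, P(d+1)$, it is the unique such sequence, which is the claim.

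I do not anticipate a serious obstacle here; the only mildly delicate point is bookkeeping the index shift between the "difference operator" form $\Delta^{d+1}v \equiv 0$ (which naturally involves $v_{n+d+1}, \ldots, v_n$) and the form~\eqref{eq:recurrence} as written (which involves $v_{n-d-1}, \ldots, v_n$ for $n > d+1$), and making sure the signs and binomial coefficients line up. One should also remark that the degree of $P$ is allowed to be less than $d$ for the statement to be used flexibly later, but since $\Delta^{d+1}$ annihilates all polynomials of degree $\le d$ the argument is unaffected.
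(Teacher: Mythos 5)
Your proof is correct. It takes a mildly different route from the paper's: the paper observes that the characteristic polynomial of the recurrence~\eqref{eq:recurrence} is $(x-1)^{d+1}$ and then invokes the standard structure theorem for linear recurrences (a single characteristic root $1$ of multiplicity $d+1$ means the solution space is spanned by $\langle n^k\rangle$ for $k=0,\dots,d$), whereas you phrase the same algebraic fact in operator form --- the recurrence is $\Delta^{d+1}v\equiv 0$ --- and then verify directly and elementarily that $\Delta$ drops polynomial degree by one, so $\Delta^{d+1}$ annihilates all polynomials of degree at most $d$. These are two faces of the same observation (the shift-minus-identity operator raised to the $(d+1)$-st power), but your version is self-contained and proves only the direction actually needed (polynomials of degree $\le d$ satisfy the recurrence), letting uniqueness from the $d+1$ initial values do the rest; the paper's version is shorter given the cited machinery and additionally characterises \emph{all} solutions of~\eqref{eq:recurrence}, which is more than the lemma requires. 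Both handle uniqueness identically, by noting that the $i=0$ coefficient is $1$ so the recurrence determines $v_n$ from the preceding $d+1$ terms. Your index bookkeeping between $(\Delta^{d+1}v)_k$ and the form of~\eqref{eq:recurrence} is right, and your closing remark that the argument tolerates $\deg P<d$ is a useful observation for the way the lemma is applied later.
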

\begin{proof}
  Let $P$ be a univariate polynomial of degree $d$. We prove that the sequence $\sq{P(n)}$ satisfies the recurrence relation~\eqref{eq:recurrence}. This suffices because any sequence satisfying a recurrence relation of order $d+1$ is determined by its first $d+1$ terms.

  Define the discrete difference operator $\Delta\st \intg[x]\to\intg[x]$ by
  \begin{align*}
    (\Delta f)(x)\defeq f(x)-f(x+1).
  \end{align*}
  Note that $\Delta$ is linear and that $\Delta f$ has degree at most $\mathrm{deg}(f)-1$. It follows that $\Delta^{d+1}P=0$, since $P$ has degree $d$. We establish the identity
  \begin{align}
    \label{eq:delta identity}
    (\Delta^k P)(x)=\sum_{i=0}^k(-1)^i{k\choose i}P(x+i),
  \end{align}
  by induction on $k$. The base case is evident, for the induction step we proceed as follows. Using the linearity of $\Delta$ we have:
  \begin{align*}
    (\Delta^{k+1}P)(x) = \sum_{i=0}^k (-1)^i{k\choose i}\left(P(x+i)-P(x+i+1)\right)
  \end{align*}
  Taking out the first and last term, splitting the sum and shifting the index by one, it is possible to write the right hand side of the equation above as:
  \begin{align*}
    &P(x)+(-1)^{k+1}P(x+k+1)\\
    &+\sum_{i=1}^k(-1)^{i}\left({k\choose i} + {k\choose i+1}\right)P(x+i).
  \end{align*}
  Using a binomial identity now we can finish the proof by writing the above as
  \begin{align*}
    &P(x)+(-1)^{k+1}P(x+k+1)+\sum_{i=1}^k(-1)^{i}{k+1\choose i}P(x+i)\\
    &=\sum_{i=0}^{k+1}(-1)^{i}{k+1\choose i}P(x+i). 
  \end{align*}
  Thus we have the identity \eqref{eq:delta identity}, which when instantiated for $k=d+1$ proves that the sequence $\sq{P(n)}$ satisfies the recurrence relation~\eqref{eq:recurrence}. 
 \end{proof}

Let us turn the statement of the above lemma into matrix form. To this end let $d\in\nat$ be a natural number. Denote the $d+1$ coefficients of the recurrence \eqref{eq:recurrence} by
\begin{align*}
  q_{i}\defeq (-1)^{i+1}{k+1\choose i},\qquad 1\le i\le d+1. 
\end{align*}
Let $\vec h_d:=(1,0,\ldots,0)\in\rel^{d+1}$ and define the matrix
\begin{align*}
  M_d\defeq \begin{pNiceMatrix}[margin]
    0 & 0 &  \cdots & 0 & q_{d+1}\\
   \Block[fill=red!15,rounded-corners]{4-4}{}
    1 & 0 &  \cdots & 0 & q_d\\
    0 & 1 &  \cdots & 0 & q_{d-1}\\
    \vdots & \vdots &\ddots & \vdots & \vdots\\
    0 & 0 & \cdots & 1 & q_1
  \end{pNiceMatrix},
\end{align*}
where the shaded block is the $d\times d$ identity matrix. 
It follows from the discussion above that for all univariate polynomials $P$ of degree $d$, and all $n\in\nat$, we have
\begin{align}
  \label{eq:equal poly}
  \big (P(1),P(2),\ldots,P(d+1)\big)\ M_d^n\ \vec h_d^\top=P(n).
\end{align}

To reduce Problem~\ref{prob:hilbert} to the algebraic-to-hyperplane multiple reachability, we proceed as follows. Let $F\in\intg[y_1,\ldots,y_n]$ be an arbitrary polynomial with integer coefficients. We define an algebraic set $S\subseteq \rel^{2n+1}$ as: 
\begin{align*}
  &(x_1,\ldots,x_{n+1},y_1,\ldots,y_n)\in S \Leftrightarrow\\
  &\bigwedge \begin{cases}
    F(y_1,\ldots,y_n)=0,\\
    x_1=(1-y_1)(1-y_2)\cdots (1-y_n),\\
    x_2=(2-y_1)(2-y_2)\cdots (2-y_n),\\
    \vdots\\
    x_{n+1}=(n+1-y_1)(n+1-y_2)\cdots (n+1-y_n).
  \end{cases}
\end{align*}
The idea is that to check whether a root $(y_1,\ldots,y_n)$ of $F$ is in $\nat^n$, we need only check that the sequence $(m-y_1)\cdots (m-y_n)$, $m\in\nat$, has $n$ zeros. More precisely, denote by $M$ the $(2n+1)\times (2n+1)$ matrix whose first $(n+1)\times (n+1)$ block is equal to $M_{n}$ and the other entries are $0$, and set $\vec h := \vec h_{2n}$. 
\begin{lemma}
  \label{lem:hilbert equivalence}
The following two statements are equivalent: 
\begin{itemize}
\item The polynomial $F$ has a root consisting of distinct positive integers. 
\item There is some $\vec p:=(x_1,\ldots,x_{n+1},y_1,\ldots,y_n)\in S$ and distinct positive integers $r_1,\ldots,r_n$ such that
  \begin{align*}
    \vec p\ M^{r_i}\ \vec h^\top = 0,\qquad 1\le i\le n. 
  \end{align*}
\end{itemize}
\end{lemma}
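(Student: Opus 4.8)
The plan is to unwind the definitions and reduce the equivalence to a statement about when the product $(m-y_1)\cdots(m-y_n)$ vanishes. First I would fix $\vec p = (x_1,\dots,x_{n+1},y_1,\dots,y_n)\in S$ and consider the univariate polynomial $G(t) \defeq (t-y_1)(t-y_2)\cdots(t-y_n) \in \rel[t]$, which has degree exactly $n$. By the definition of $S$, the first $n+1$ coordinates of $\vec p$ are precisely $G(1),G(2),\dots,G(n+1)$. Now I would invoke \Cref{cor:recurrence} (in its matrix form \eqref{eq:equal poly}, applied with $d = n$): since the leading $(n+1)\times(n+1)$ block of $M$ is $M_n$, and the remaining coordinates of $\vec p$ and of $\vec h = \vec h_{2n}$ interact trivially with the zero block, we get $\vec p\, M^{r}\, \vec h^\top = G(r)$ for every $r\in\nat$. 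Hence the second bullet's condition ``$\vec p\, M^{r_i}\,\vec h^\top = 0$ for distinct positive integers $r_1,\dots,r_n$'' is equivalent to ``$G$ has at least $n$ distinct positive integer roots among $r_1,\dots,r_n$''.

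The remaining work is to connect this with membership of $(y_1,\dots,y_n)$ in $\nat^n$ (as distinct elements). For the direction from the first bullet to the second: if $F$ has a solution in distinct positive integers $y_1,\dots,y_n$, take that solution; then $F(y_1,\dots,y_n)=0$ and setting $x_j = G(j)$ for $1\le j\le n+1$ places the corresponding point $\vec p$ in $S$; and $G(t)$ vanishes exactly at $t = y_1,\dots,y_n$, which are $n$ distinct positive integers, so taking $r_i = y_i$ gives the required witnesses. For the converse: suppose $\vec p \in S$ and there are distinct positive integers $r_1,\dots,r_n$ with $G(r_i) = 0$. Since $G$ has degree $n$ and is monic, its multiset of roots is exactly $\{r_1,\dots,r_n\}$; comparing with the factorisation $G(t) = \prod_j (t - y_j)$ we conclude that $\{y_1,\dots,y_n\} = \{r_1,\dots,r_n\}$ as multisets, so the $y_j$ are (a permutation of) distinct positive integers. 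Because $\vec p\in S$ also forces $F(y_1,\dots,y_n) = 0$, the polynomial $F$ has a solution in distinct positive integers.

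I would present this cleanly by stating up front the key identity $\vec p\, M^{r}\, \vec h^\top = \prod_{j=1}^n (r - y_j)$ for all $r\in\nat$ and all $\vec p \in S$, proving it via \eqref{eq:equal poly}, and then deriving both implications as short arguments about monic polynomials and their roots. The one point that needs a little care — and is the only real obstacle — is checking that the padding with zeros does not disturb the recurrence: one must verify that multiplying by $M$ keeps the relevant linear functional $\vec x \mapsto \vec x\, M^r\, \vec h^\top$ depending only on the first $n+1$ coordinates and behaving exactly as the $M_n$-dynamics on those coordinates, i.e. that $\vec h_{2n}$ and the block structure of $M$ are compatible so that \eqref{eq:equal poly} lifts verbatim. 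This is a routine block-matrix computation, but it is where an off-by-one in the dimension (e.g. $d = n$ versus $d = 2n$, the size of $\vec h$ versus the size of the active block) would go wrong, so I would spell it out.
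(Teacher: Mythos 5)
Your proposal is correct and follows essentially the same route as the paper's own proof: both hinge on the identity $\vec p\, M^{r}\, \vec h^\top = (r-y_1)\cdots(r-y_n)$ obtained from \eqref{eq:equal poly} and the block structure of $M$, and both directions then reduce to matching the root multiset of this monic degree-$n$ polynomial with $\{r_1,\ldots,r_n\}$. The extra care you flag about the zero padding is exactly the observation the paper makes when deriving its equation \eqref{eq:equal poly2}.
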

\begin{proof}
  \noindent($\Rightarrow$) Let $y_1,\ldots, y_n$ be distinct positive integers with $F(y_1,\ldots,y_n)=0$.
  For all $i\in\set{1,\ldots, n+1}$, set
  \begin{align*}
    x_i:=(i-y_1)(i-y_2)\cdots (i-y_n),
  \end{align*}
  Then $\vec p:=(x_1,\ldots,x_{n+1},y_1,\ldots, y_n)\in S$ by definition. The definition of the matrix $M$ above (that has nonzero entries only in the first $(n+1)\times (n+1)$ block) and \eqref{eq:equal poly} imply that for all $r\in\nat$ we have
  \begin{align}
    \label{eq:equal poly2}
    \vp\ M^r\ \vec h^\top=(r-y_1)(r-y_2)\cdots (r-y_n). 
  \end{align}
  Hence the second statement of the lemma holds for the distinct positive integers $r_i=y_i$. 

  \noindent($\Leftarrow$) Let $\vp$ and distinct positive integers $r_1,\ldots, r_n$ be such that the second statement holds. Then \eqref{eq:equal poly2} implies that the tuple $(y_1,\ldots,y_n)$ is a permutation of the tuple of distinct positive integers $(r_1,\ldots, r_n)$. It follows from the definition of $S$ that the same permutation is also a root~of~$F$. 
\end{proof}
\Cref{prop:hilbert} and \Cref{lem:hilbert equivalence} imply that al\-ge\-bra\-ic-to-hy\-per\-plane multiple reachability is undecidable, which is~\Cref{thm:undecidable}. Indeed the set $S$ defined above is algebraic,\footnote{As mentioned in the previous section, the real vectors $\vec x$ for which $P(\vec x)=0$ and $Q(\vec x)=0$ coincide with the real vectors $\vec x$ for which $P(\vec x)^2+Q(\vec x)^2=0$.} and $\vec h$ is the normal vector of some hyperplane (recall that a point $\vec x$ is on the hyperplane with a normal vector $\vec h$ if and only if $\vec x\cdot\vec h^\top=0$).

More precisely, we have shown that a procedure to decide al\-ge\-bra\-ic-to-hyperplane multiple reachability in dimension $2n+1$ can be used to effectively solve Diophantine equations with $n$ variables. By projecting away the coordinates $y_1,\ldots, y_n$ in the definition of $S$ above,  we obtain a semialgebraic set. Hence a procedure to decide \emph{semialgebraic}-to-hyperplane multiple reachability in dimension $n+1$ can be used to effectively solve Diophantine equations with $n$ variables. In light of \Cref{prop:hilbert2}, we have the following theorem.
\begin{theorem}
  \label{thm:undecidable dim}
  Algebraic-to-hyperplane multiple reachability is undecidable in dimension~$19$.
Semialgebraic-to-hyperplane multiple reachability is undecidable in dimension~$10$.
\end{theorem}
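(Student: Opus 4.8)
The plan is to assemble the components already in place. First I would note that \Cref{lem:hilbert equivalence}, applied to an arbitrary $F\in\intg[y_1,\ldots,y_n]$, furnishes an effective reduction from \Cref{prob:hilbert} with $k=n$ variables to an instance of algebraic-to-hyperplane multiple reachability: the ambient dimension is $2n+1$, the source set is the algebraic set $S\subseteq\rel^{2n+1}$, the target is the hyperplane $\set{\vec x\in\rel^{2n+1}\st\vec x\cdot\vec h^\top=0}$ with $\vec h=\vec h_{2n}$, and the prescribed number of visits is $m=n$. Invoking \Cref{prop:hilbert2}, which says \Cref{prob:hilbert} is undecidable already for $k=9$, then gives the first claim: algebraic-to-hyperplane multiple reachability is undecidable in dimension $2\cdot 9+1=19$, and indeed with the fixed value $m=9$.

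For the semialgebraic statement I would take advantage of the block structure of $M$. Since $M$ is zero outside its leading $(n+1)\times(n+1)$ block, which equals $M_n$, and $\vec h=\vec h_{2n}$ is supported on the first coordinate, the scalar $\vec p\,M^r\,\vec h^\top$ is unchanged if one replaces $\vec p=(x_1,\ldots,x_{n+1},y_1,\ldots,y_n)$ by $(x_1,\ldots,x_{n+1})$ and works with $M_n$ and $\vec h_n$ in $\rel^{n+1}$. Hence the multiple reachability instance produced above is equivalent to the one in dimension $n+1$ with matrix $M_n$, target hyperplane of normal $\vec h_n$, and source set $S'\subseteq\rel^{n+1}$ the image of $S$ under the coordinate projection that forgets $y_1,\ldots,y_n$. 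By the Tarski--Seidenberg theorem, $S'$ is semialgebraic and a quantifier-free description of it can be computed effectively (e.g.\ by Tarski's algorithm applied to $\exists y_1\cdots\exists y_n$ over the defining formula of $S$). Combining this equivalence with \Cref{lem:hilbert equivalence} and \Cref{prop:hilbert2} yields undecidability of semialgebraic-to-hyperplane multiple reachability in dimension $9+1=10$, again with $m=9$.

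There is no genuine obstacle here: the hard work has already been carried out in \Cref{cor:recurrence} and \Cref{lem:hilbert equivalence}, and the theorem is a matter of bookkeeping (the dimension counts $2n+1$ and $n+1$, and the undecidability transfer from \Cref{prop:hilbert2}). The one point that deserves a careful check is the claim that discarding the $y$-coordinates produces an \emph{equivalent} reachability instance rather than merely an over-approximation --- i.e.\ that no data needed to decide membership of an orbit point in the target hyperplane is lost --- which is exactly what the block structure of $M$ guarantees.
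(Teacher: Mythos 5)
Your proposal is correct and follows essentially the same route as the paper: the dimension-$19$ claim is exactly \Cref{lem:hilbert equivalence} combined with \Cref{prop:hilbert2}, and the dimension-$10$ claim is obtained, as in the paper, by projecting away the coordinates $y_1,\ldots,y_n$ (yielding a semialgebraic source by Tarski--Seidenberg) and observing that the block structure of $M$ and the support of $\vec h$ let one pass to the equivalent instance in $\rel^{n+1}$ with matrix $M_n$. Your explicit justification of why the projected instance is equivalent, not merely an over-approximation, is precisely the point the paper leaves implicit.
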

Effectively solving Diophantine equations is notoriously difficult. Even Thue equations, \ie equations of the type $P(\vec x)=m$ where $P$ is a homogeneous polynomial, could only be solved effectively in the second half of the twentieth century, after the work of Alan Baker~\cite[Theorem 4.1]{baker1990transcendental}. As a consequence, in the next section, we focus our efforts in understanding the multiple reachability problem on the affine plane, \ie when the dimension is fixed at $d=2$. As we shall see, even on the plane, multiple reachability can be quite challenging.

In the undecidability proof of this section, the matrix $M$ is not diagonalisable. It is interesting to explore the multiple reachability problem for diagonalisable matrices, as the latter is a property that holds for generic matrices. This is at least as hard as the Positivity Problem for diagonalisable linear recurrence sequences.

\section{Algorithms on the Affine Plane}
This section is devoted to the proof of \Cref{thm:pos1}. The dimension $d=2$ is fixed. The system is given in the form of a $2 \times 2$ matrix with rational entries. The eigenvalues of such a matrix have one of the following forms: (a)~a~pair of complex conjugates $\lambda, \conj\lambda\in\alg$, (b) two real roots $\rho_1, \rho_2\in\alg\cap\rel$, or (c) a repeated real root $\rho\in\alg\cap\rel$. When the eigenvalues are a pair of complex conjugates and $|\lambda|=1$ we say that the matrix is a \textbf{rotation}.

\Cref{thm:pos1} consists of two statements. The first statement, \Cref{thm:pos1}(i), restricts targets to halfspaces (whereas the matrix is arbitrary), and its proof is postponed to \Cref{sec:halfplane targets}. The second statement, \Cref{thm:pos1}(ii), restricts the matrix to rotations (whereas the target is arbitrary); its proof follows.

\subsection{Rotations}
We will first give the simple reduction to solving systems of polynomial inequalities in powers of some $\lambda$, as discussed in the introduction, followed by a proof overview.

\subsubsection{Reduction}
\label{sec:reduction}

Let $\vec S,\vec T\subseteq \rel^2$ be the source and target semialgebraic sets, given by the formulas $\Phi_{\vec S}, \Phi_{\vec T}$ of first-order logic of reals.
Further, let $M$ be a matrix whose eigenvalues are the pair $\lambda,\conj\lambda$ on the unit circle, that is $|\lambda|=1$, and let $m\in\nat$. 
We have to give a procedure for deciding whether there exists some $\vp\in\vec S$ and distinct positive integers $x_1,\ldots, x_m\in\nat$ such that
\begin{align*}
  \vp\ M^{x_i}\in \vec T, 
\end{align*}
for all $i\in\set{1,2,\ldots, m}$.

We proceed by eliminating the existential quantifier in the decision question. To this end, let $\vec v =(v_1,v_2)$ be a tuple of variables, let $V_1,\ldots, V_m$ be $2\times 2$ matrices of fresh variables, and consider the following formula:
\begin{align*}
  \Gamma(\vec v, V_1,\ldots, V_m)\defeq \Phi_{\vec S}(\vec v) \land \bigwedge_{i=1}^m\Phi_T\left(\vec v\ V_i\right). 
\end{align*}
The multiple reachability decision problem asks whether there is some $\vp\in\rel^2$ and distinct positive integers $x_1,\ldots, x_m$ such that
\begin{align}
  \label{eq:dec prob}
  \Gamma(\vp, M^{x_1},\ldots, M^{x_m})
\end{align}
holds. Eliminating the existential quantifiers for $\vec v$ from $\Gamma$, we effectively obtain another formula $\Gamma'(V_1,\ldots, V_m)$  such that \eqref{eq:dec prob} holds for some point $\vec p$ if and only if $\Gamma'(M^{x_1},\ldots, M^{x_m})$ is true. Tuples of reals that satisfy $\Gamma'$ form a semialgebraic set; which can be written as a finite union of sets of the form \eqref{eq:sa sets}, that is a system of one polynomial equality and a finite number of polynomial inequalities. Each set in this union can be treated separately, so let $P_0,\ldots, P_{\ell}$ be polynomials (with integer coefficients) of one of the sets:
\begin{align*}
  \Psi(V_1,\ldots, V_m)\defeq
  \bigwedge\begin{cases}
    P_0(V_1,\ldots, V_m)&=0,\\
    P_1(V_1,\ldots, V_m)&>0,\\
    \vdots\\
    P_{\ell}(V_1,\ldots, V_m)&>0.
  \end{cases}
\end{align*}

We want to prove that we can decide whether there are distinct positive integers $x_1,\ldots, x_m$ such that
\begin{align}
  \label{eq:psi}
  \Psi(M^{x_1},\ldots, M^{x_m})
\end{align}
holds. We will simply call any such tuple $(x_1,\ldots, x_m)$ a \textbf{solution}. 

By diagonalisation, there are algebraic numbers $c_1,\ldots, c_4\in\alg$  such that for all $n\in\nat$
\begin{align*}
  M^n=\begin{pmatrix}
        c_1\lambda^n+\conj{c_1\lambda^n} & c_2\lambda^n+\conj{c_2\lambda^n}\\
        c_3\lambda^n+\conj{c_3\lambda^n} & c_4\lambda^n+\conj{c_4\lambda^n}
      \end{pmatrix}.
\end{align*}
So when polynomials $P_0,\ldots, P_{\ell}$ are instantiated with $M^x$ they can be seen as polynomials in $\lambda^x$ and $\conj\lambda^x=\lambda^{-x}$; in other words there are polynomials $Q_0,\ldots, Q_{\ell}$ with algebraic coefficients such that
\begin{align*}
  P_i(M^{x_1},\ldots, M^{x_m})=Q_i(\lambda^{x_1},\lambda^{-x_1},\ldots,\lambda^{x_m},\lambda^{-x_m}),
\end{align*}
for $0\le i\le \ell$ and all tuples of integers $(x_1,\ldots, x_m)\in\intg^m$. Let us assume at once that as part of the strict inequalities we have ones of the type
\begin{align}
  \label{eq:make sure distinct}
  \lambda^{x_j}+\lambda^{-x_j}>\lambda^{x_k}+\lambda^{-x_k},
\end{align}
for $j\ne k$ to ensure that the $x_i$ are all distinct. This is without loss of generality because any solution would certainly belong to one of these augmented semialgebraic sets. We will show how to decide if there is a solution.

\subsubsection{Proof Overview}
We begin in the next subsection by considering the case that there are only polynomial inequalities to satisfy. This case is simpler. Intuitively, on the complex plane the angles of $\lambda^n$, for integer $n$, are dense in $[0,2\pi]$, and the target set is made out of strict polynomial inequalities and therefore is open in the usual topology.  Thus, if the target set is non-empty, it contains some $\lambda^n$. The proof uses Kronecker's theorem  on simultaneous Diophantine approximation.

In subsection~\ref{sec::alg-subgroups-and-tori}, we recall the theory of algebraic subgroups and linear tori to the extent that is needed in the sequel. As we described briefly in the introduction, the reason for considering algebraic subgroups is because all $(\lambda^{x_1},\ldots, \lambda^{x_m})$ for $x_i\in\intg$ belong to an algebraic subgroup of dimension~1. We want to apply the result of Bombieri and Zannier which says that the intersection of algebraic subgroups of dimension~1 and a variety has bounded (Weil) height. For us the variety (which we denote by $X$) is the zero set of the polynomial~$Q_0$.

It is possible for a variety to contain a whole algebraic subgroup. When this happens, the height of points in the intersection may be unbounded. These cases need to be treated by separate means. This is the reason for the partition $X=X^{\circ}\cup X^{\bullet}$, as those \emph{degenerate} points are contained in $X^\bullet$. 
The end goal of subsection~\ref{sec::alg-subgroups-and-tori} is to show that we can compute the polynomial equations that define $X^\bullet$, and to state a structure theorem, giving more information about this subset.

In subsection~\ref{sec::heights} we introduce heights and the Bombieri-Zannier  theorem. In the end we tie these three subsections together by describing how the theorems can be used to decide the existence of solutions. The algorithm is conceptually simple. To check whether there is a solution in $X^\circ$ we use the height bound to derive an upper bound on the absolute value of the exponents $|x_i|$, and then simply try every one of the finitely many possibilities. If no solution is found, it remains to check whether there is one in $X^\bullet$. To this end, the algorithm constructs the defining polynomials of $X^\bullet$, and by exploiting the structure theorem, checking for solutions in $X^\bullet$ is reduced to the problem of whether there is a solution in a set defined by strict polynomial inequalities.

\subsubsection{System of Inequalities}
We prove a slightly more general result, where we allow $(x_1,\ldots,x_m)$ to range over members of a (additive) subgroup of $\intg^m$. 
\begin{lemma}
  \label{lem:no tall solutions}
  Let $\Lambda\subseteq \intg^m$ be a subgroup under addition.
  Let $\lambda\in\alg$ be as above, and suppose that we are given polynomials $S_1,\ldots,S_k$ in $2m$ variables and algebraic coefficients, such that $S_i(z_1,\conj{z_1},\ldots,z_m,\conj{z_m})$ is real-valued for all complex $z_j$ and all $i$. Then there is a procedure to decide whether there exists $(x_1,\ldots,x_m)$ in $\Lambda$, with positive coordinates, simultaneously satisfying
  \begin{align}
    \label{eq:ineqs s}
    S_i(\lambda^{x_1},\ldots, \lambda^{-x_m})>0, \text{ for all $i\in\set{1,\ldots,k}$}. 
  \end{align}
\end{lemma}
\begin{proof}
  Suppose that the subgroup $\Lambda$ is given as the integer points in the kernel of a matrix $A$ with integer entries, $m$ rows, and $m'\le m$ columns. We have:
  \begin{align*}
    \Lambda = \set{\vec x \in \intg^m\st \vec x\ A = \vec 0}.
  \end{align*}
  First check that this subgroup contains elements with positive coordinates, if it does not, clearly we answer no.
  
  Denote by $\torus$ the unit circle in the complex plane. We will write $\vec z$ for the vector $(z_1,\ldots, z_m)$ and for any vector $\vec b=(b_1,\ldots, b_m)$ of length $m$, we abbreviate
  \begin{align*}
    \vec z^{\vec b}=z_1^{b_1}\cdots z_m^{b_m}.
  \end{align*}
  Denote by $\vec a_1,\ldots,\vec a_{m'}$ the columns of $A$, and define the following semialgebraic sets:
  \begin{align*}
    \vec R &\defeq \set{\vec z \in \torus^m\st \vec z^{\vec a_i}=1\text{ for all }1\le i\le m'},\\
    \vec R' &\defeq \{\vec z \in \vec R\st S_i(z_1, z_1^{-1}, \ldots, z_m, z_m^{-1})>0,\\ &\text{ for all } 1\le i\le k\}.
  \end{align*}
  Intuitively, the set $\vec R$ is all the numbers with coordinates in the unit circle and exponents that belong to the subgroup $\Lambda$. 
  In particular, $(\lambda^{x_1},\ldots,\lambda^{x_m})$ is in $\vec R$ if and only if $\vec x\in \Lambda$. 
  Meanwhile, $\vec R'$ is the subset of such numbers that also satisfy the polynomial inequalities \eqref{eq:ineqs s}. Clearly, if $\vec R'$ is empty, there are no solutions to \eqref{eq:ineqs s}; but if it is not empty we argue below that there will always be at least one solution. Since $\vec R'$ is a semialgebraic set, we can use Tarski's algorithm to decide whether it is empty or not. 

  To show that $\vec R'\ne\emptyset$ implies the existence of a solution we use the following theorem, due to Kronecker, on simultaneous Diophantine approximations. 

  \begin{theorem}[Theorem IV in Page~53 of~\cite{cassels59_introd_to_dioph_approx}]
    Let
    \begin{align*}
      L_j(\vec y)=L_j(y_1,\ldots,y_{m'}),\ \ 1\le j\le m,
    \end{align*}
    be $m$ homogeneous linear forms in any number $m'$ of variables $y_i$. Then the two following statements about a real vector $\vec\alpha=(\alpha_1,\ldots,\alpha_m)$ are equivalent:
    \begin{enumerate}
    \item For all $\epsilon>0$ there is an integral vector $\vec a=(a_1,\ldots, a_{m'})$ such that simultaneously
      \begin{align*}
        |L_j(\vec a)-\alpha_j|<\epsilon, \ \ 1\le j\le m.
      \end{align*}
      
    \item If $\vec u=(u_1,\ldots, u_m)$ is any integral vector such that:
      \begin{align*}
        u_1L_1(\vec y)+\cdots+u_mL_m(\vec y)
      \end{align*}
      has integer coefficients, considered as a form in the indeterminates $y_i$, then
      \begin{align*}
        u_1\alpha_1+\cdots+u_m\alpha_m\in\intg.
      \end{align*}
    \end{enumerate}
  \end{theorem}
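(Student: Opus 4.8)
I would prove the two implications separately; $(1)\Rightarrow(2)$ is a short limiting argument, while $(2)\Rightarrow(1)$ — the substance of Kronecker's theorem — I would obtain by an annihilator duality on the torus. For $(1)\Rightarrow(2)$: suppose $(1)$ holds and let $\vec u=(u_1,\ldots,u_m)\in\intg^m$ be such that the form $u_1L_1(\vec y)+\cdots+u_mL_m(\vec y)$ has integer coefficients. Then $\sum_{j=1}^m u_jL_j(\vec a)\in\intg$ for every integral $\vec a$, since it is a fixed integer linear form evaluated at an integer point. Given $\epsilon>0$, choose $\vec a$ as in $(1)$; then $\sum_j u_j\alpha_j$ lies within $\epsilon\sum_j|u_j|$ of the integer $\sum_j u_jL_j(\vec a)$ (up to an integer translation, which is itself an integer multiple that can be absorbed). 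Letting $\epsilon\to0$ forces $\sum_j u_j\alpha_j\in\intg$, which is $(2)$.

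\textbf{The hard direction.} I would reformulate everything inside the compact group $\torus^m$, identified with $\rel^m/\intg^m$ (so that the approximation in $(1)$ is read modulo integer translations, as is standard for Kronecker's theorem and consistent with the mod-$2\pi$ conventions used elsewhere). Since the $L_j$ are homogeneous, $\vec a\mapsto(L_1(\vec a),\ldots,L_m(\vec a))$ is a homomorphism $\intg^{m'}\to\torus^m$, so its image
\[
  G\defeq\set{(L_1(\vec a),\ldots,L_m(\vec a))\bmod\intg^m\st \vec a\in\intg^{m'}}
\]
is a subgroup, with topological closure $\overline G$ a closed subgroup of $\torus^m$; statement $(1)$ is precisely the assertion that $\vec\alpha\bmod\intg^m\in\overline G$. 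The key input is the annihilator duality for closed subgroups of the torus: for any closed subgroup $H\le\torus^m$, writing $H^\perp\defeq\set{\vec u\in\intg^m\st \langle\vec u,\vec x\rangle\in\intg\text{ for all }\vec x\in H}$ with $\langle\vec u,\vec x\rangle=\sum_j u_jx_j$ read modulo $\intg$, one has $H=(H^\perp)^\perp$, the outer perp taken inside $\torus^m$; equivalently $H$ is the intersection of the kernels of the characters $\vec x\mapsto e^{2\pi\ii\langle\vec u,\vec x\rangle}$ that vanish on $H$. Granting this, I would identify $\overline G^\perp$ explicitly: $\vec u\in\overline G^\perp$ iff $\langle\vec u,\vec x\rangle\in\intg$ for all $\vec x\in G$, iff $\sum_j u_jL_j(\vec a)\in\intg$ for all $\vec a\in\intg^{m'}$, and — evaluating at the standard basis vectors $\vec a=\vec e_k$ and using linearity — iff the form $\sum_j u_jL_j(\vec y)$ has integer coefficients. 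Plugging this into the duality, $\vec\alpha\bmod\intg^m\in\overline G=(\overline G^\perp)^\perp$ says exactly that $\sum_j u_j\alpha_j\in\intg$ for every $\vec u$ such that $\sum_j u_jL_j$ has integer coefficients, i.e.\ $(2)$. Thus $(2)$ implies $\vec\alpha\bmod\intg^m\in\overline G$, which is $(1)$.

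\textbf{Main obstacle.} The one nontrivial ingredient is the duality $H=(H^\perp)^\perp$ for closed subgroups of $\torus^m$; this is the analytic heart of the theorem. I would establish it either by citing Pontryagin duality for compact abelian groups, or self-containedly as follows. The inclusion $H\subseteq(H^\perp)^\perp$ is immediate. For the converse, take $\vec x_0\notin H$; the cosets $H$ and $\vec x_0+H$ are disjoint closed sets, so by Urysohn there is a continuous $f\colon\torus^m\to[0,1]$ with $f|_H=0$ and $f|_{\vec x_0+H}=1$. Approximate $f$ uniformly to within $1/4$ by a trigonometric polynomial $P$ (Stone--Weierstrass), and replace $P$ by its average $\widetilde P(\vec x)=\int_H P(\vec x+\vec h)\,d\vec h$ over the Haar measure of $H$; since averaging is a contraction, $\|\widetilde f-\widetilde P\|_\infty<1/4$ where $\widetilde f(\vec x_0)=1$ and $\widetilde f|_H=0$. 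Now $\widetilde P$ is $H$-invariant, hence a trigonometric polynomial supported only on characters in $H^\perp$, and $|\widetilde P|<1/4$ on $H$ while $|\widetilde P(\vec x_0)|>3/4$. If every character $e^{2\pi\ii\langle\vec u,\cdot\rangle}$ in its support had $\langle\vec u,\vec x_0\rangle\in\intg$, we would get $\widetilde P(\vec x_0)=\widetilde P(\vec 0)$, contradicting $\vec 0\in H$. Hence some $\vec u\in H^\perp$ has $\langle\vec u,\vec x_0\rangle\notin\intg$, so $\vec x_0\notin(H^\perp)^\perp$. Alternatively, one can first bring $H$ by a $\mathrm{GL}_m(\intg)$ change of coordinates into the normal form $\torus^k\times F$ with $F\le\torus^{m-k}$ finite, for which the double-perp identity is a direct calculation. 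The remaining steps — the limiting argument above and the identification of $\overline G^\perp$ with integer-coefficient forms — are routine.
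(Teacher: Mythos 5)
Your argument is essentially correct, but there is nothing in the paper to compare it against: the paper does not prove this statement at all --- it is quoted (Theorem~IV, p.~53 of~\cite{cassels59_introd_to_dioph_approx}) and used as a black box inside the proof of \Cref{lem:no tall solutions}. What you give is the standard duality proof of Kronecker's theorem on the torus: identify statement~(1) with membership of $\vec\alpha \bmod \intg^m$ in the closure $\overline G$ of the image of $\intg^{m'}$ in $\torus^m$, compute $\overline G^{\perp}$ as the set of integral $\vec u$ for which $\sum_j u_jL_j$ has integer coefficients (the reduction to standard basis vectors is fine), and invoke the annihilator identity $H=(H^{\perp})^{\perp}$ for closed subgroups of $\torus^m$, which you establish correctly via Urysohn, Stone--Weierstrass, and averaging over the Haar measure of $H$ --- this is exactly the separation-by-characters argument behind Pontryagin duality, and your alternative reduction to the normal form $\torus^k\times F$ by a $\mathrm{GL}_m(\intg)$ change of coordinates is also valid. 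One point deserves emphasis: your decision to read statement~(1) ``modulo integer translations'' is not a convenience but a necessity. As transcribed in the paper, with $|L_j(\vec a)-\alpha_j|$ an ordinary absolute value, the implication (2)$\Rightarrow$(1) is false (take $m=m'=1$, $L_1(y)=\theta y$ with $\theta$ irrational, $\alpha=\theta/2$: condition~(2) is vacuous, yet $\theta\intg$ is discrete); Cassels states~(1) with $\|\cdot\|$, the distance to the nearest integer, which is precisely what your torus formulation encodes, and it is also all the paper needs, since in its application the $\alpha_j$ enter only through $\exp(2\pi\ii\,\alpha_j)$. With that same mod-$1$ reading your easy direction is sound: $\sum_j u_j\alpha_j$ lies within $\epsilon\sum_j|u_j|$ of an integer for every $\epsilon>0$, hence is an integer.
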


  In order to apply this theorem, we define our linear forms $L_i$ as follows. By putting $A$ in a row-reduced echelon form, finding a basis and multiplying with a suitable scalar, we can compute a set of integral vectors $b_1,\ldots, b_{m'}$ that generate $\Lambda$. Write $\lambda=\exp(\vartheta 2\pi\ii)$ and suppose that $\vartheta$ is irrational (otherwise $\lambda$ is a root of 1 and the lemma is trivial). For $1\le j\le m$ define:
  \begin{align*}
    L_j(y_1,\ldots, y_{m'})\defeq \sum_{i=1}^{m'}\vartheta\ b_{i,j}\ y_i. 
  \end{align*}
  Suppose that $\vec R'$ is nonempty, and choose some element $\zeta\in\vec R'$ and write it as:
  \begin{align*}
\zeta=    \big(\exp(\alpha_1 2\pi\ii), \ldots, \exp(\alpha_m 2\pi\ii)\big).
  \end{align*}
  Let $\vec u = (u_1,\ldots, u_m)\in\intg^m$ be such that $\sum u_iL_i(\vec y)$ has integer coefficients, considered as a form in the indeterminates~$y_i$. A short computation shows that since $\vartheta$ is irrational, for such $\vec u$ we have
  \begin{align*}
    \vec u\ B = \vec 0,
  \end{align*}
  where $B$ is the matrix that has the vectors $b_1,\ldots, b_{m'}$ as columns. This means that such vectors $\vec u$ belong to the orthogonal complement of the linear subspace $V\subseteq \rel^{m}$, spanned by $b_1, \ldots, b_{m'}$. By virtue of $\zeta$ belonging to $\vec R'$ and hence also $\vec R$, we have that $(\alpha_1,\ldots, \alpha_m)$ belongs to $V$, and consequently $\sum u_i\alpha_i=0$. We have proved that Statement~2 in the above theorem holds for our real vector $\alpha$. Applying the theorem gives us Statement~1, namely that there are integral vectors $\vec a$ that make $L_j(\vec a)$ get arbitrarily close to $\alpha_j$. As $\vec a$ ranges over $\intg^{m'}$, $(L_1(\vec a), \ldots, L_m(\vec a))$ ranges over $\vartheta\Lambda$, which in turn means that
  \begin{align}
    \label{eq:in R}
  (\lambda^{L_1(\vec a)/\vartheta}, \ldots, \lambda^{L_m(\vec a)/\vartheta})\in\vec R,
\end{align}
and gets arbitrarily close to $\zeta$. Finally, since $\vec R'$ is an open subset of $\vec R$, by choosing $\epsilon$ small enough, we get some $\vec a$ such that the tuple of~\eqref{eq:in R} belongs to the subset $\vec R'$.

It remains to check that we  can find one such $\vec a$ such that the exponents in \eqref{eq:in R} are positive. We know that the subgroup $\Lambda$ has elements with positive coordinates, and this is in fact sufficient, due to an equidistribution theorem of Weyl that can be found in the section starting at Page~64 of~\cite{cassels59_introd_to_dioph_approx}. 
\end{proof}
\subsubsection{Algebraic Subgroups and Tori}
\label{sec::alg-subgroups-and-tori}
We begin with a few definitions. The general theory is developed more extensively in \cite{schmidt}, \cite{schinzel2000polynomials}, and especially in \cite[Chapter~3]{bombieri2007heights}. We borrow from the latter freely. 

It is convenient in the rest of this section to set $n:=2m$, where $m$ is the number of times we want to enter the target set.  A \textbf{variety} $Y$ in affine $n$-dimensional space $\alg^n$ is defined to be the set of solutions $(y_1,\ldots, y_n)$ of a system of polynomial equations $f_i(y_1,\ldots, y_n)=0$, where each $f_i$ has algebraic coefficients. We say that a variety is \textbf{irreducible} if it cannot be written as the union of two proper subvarieties.

We define $\multg n$ to be the set of tuples $(z_1,\ldots, z_n)$ of non-zero algebraic numbers.
It has a group structure under component-wise multiplication:
\begin{align*}
  (y_1,\ldots, y_n) \cdot (z_1,\ldots, z_n) = (y_1z_1,\ldots, y_nz_n). 
\end{align*}

The variety that we are interested in, which we will denote by $X\subseteq \multg n$, is the zero set of our polynomial $Q_0$, conjoined with polynomial equations
\begin{align*}
  z_jz_{j+1}-1=0,
\end{align*}
where $1\le j\le n$ is an odd number, to ensure that the conjugacy relations hold. We assume that $X$ is irreducible, for otherwise we can factorize the polynomials and treat the irreducible components in turn. We will effectively find all points in the intersection of this variety and some algebraic subgroup of dimension 1, which we now define. 

An \textbf{algebraic subgroup} is a subvariety of $\multg n$ that is also a subgroup. As an example, given an additive subgroup $\Lambda\subseteq \intg^n$, we can see that it determines an algebraic subgroup
\begin{align*}
  H_{\Lambda}\defeq \set{(z_1,\ldots, z_n)\in \multg n : z_1^{a_1}z_2^{a_2}\cdots z_n^{a_n}=1\text{ for all }\vec a \in \Lambda}. 
\end{align*}
In fact every algebraic subgroup is of this type, \cite[Corollary 3.2.15]{bombieri2007heights}. Further, if $\Lambda$ is a subgroup of $\intg^n$ of rank $n-r$ then $H_\Lambda$ is an algebraic subgroup of dimension $r$. By dimension here we mean the dimension of the variety, see for example~\cite[Page~5]{hartshorne77_algeb_geomet}. One way of defining the dimension of a variety $X$ is as the maximum length of a  chain $X_0\subset X_1\subset \cdots \subset X_k$ of irreducible subvarieties of $X$.

We prove that powers of $\lambda$ belong to algebraic subgroups of dimension~1, as remarked in the introduction. 
\begin{lemma}
  \label{lem:belong to algebraic groups of dim 1}
  For all $(a_1,\ldots, a_k)\in\intg^k$, the point
  \begin{align*}
    \left(\lambda^{a_1},\ldots, \lambda^{a_k}\right)
  \end{align*}
  belongs to an algebraic subgroup of dimension 1. 
\end{lemma}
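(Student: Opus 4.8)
The plan is to exhibit, for a given tuple $(a_1,\ldots,a_k)\in\intg^k$, an explicit algebraic subgroup of $\multg k$ of dimension exactly $1$ that contains the point $(\lambda^{a_1},\ldots,\lambda^{a_k})$. Recall from the discussion preceding the lemma that every subgroup $\Lambda\subseteq\intg^k$ of rank $k-r$ determines an algebraic subgroup $H_\Lambda$ of dimension $r$. So it suffices to produce a subgroup $\Lambda\subseteq\intg^k$ of rank $k-1$ such that $(\lambda^{a_1},\ldots,\lambda^{a_k})\in H_\Lambda$, \ie such that $\lambda^{\langle \vec b,\vec a\rangle}=1$ for every $\vec b\in\Lambda$ (writing $\langle\cdot,\cdot\rangle$ for the standard inner product and $\vec a=(a_1,\ldots,a_k)$).

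The natural candidate is $\Lambda \defeq \set{\vec b\in\intg^k \st \langle \vec b,\vec a\rangle = 0} = \vec a^{\perp}\cap\intg^k$. First I would check the containment: if $\vec b\in\Lambda$ then $\langle\vec b,\vec a\rangle=0$, hence $\lambda^{\langle\vec b,\vec a\rangle}=\lambda^0=1$, so by definition $(\lambda^{a_1},\ldots,\lambda^{a_k})\in H_\Lambda$. Next I would compute the rank of $\Lambda$. If $\vec a=\vec 0$ then $\Lambda=\intg^k$ has rank $k$, which would give $H_\Lambda$ of dimension $0$, not $1$; but in our application the relevant tuples come from genuine exponent vectors and $\lambda$ is not a root of unity, so this degenerate case either does not arise or can be handled separately (one may simply add one free coordinate, or note that dimension $0\le 1$ and the statement can be read as ``a subgroup of dimension $\le 1$''; I would remark on this). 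When $\vec a\ne\vec 0$, the orthogonal complement $\vec a^\perp$ in $\rat^k$ is a hyperplane, so $\vec a^\perp\cap\intg^k$ is a subgroup of $\intg^k$ of rank $k-1$ (it contains $k-1$ linearly independent integer vectors, e.g.\ obtained by clearing denominators from a $\rat$-basis of $\vec a^\perp$, and cannot have rank $k$ since that would force $\vec a=\vec 0$). Hence by the cited correspondence $H_\Lambda$ is an algebraic subgroup of dimension $(k)-(k-1)=1$, completing the argument.

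I do not expect a serious obstacle here; the content is entirely the bookkeeping of the rank-to-dimension dictionary $\mathrm{rk}(\Lambda)=k-r \iff \dim H_\Lambda = r$ recalled just before the lemma, together with the elementary linear algebra that $\vec a^\perp\cap\intg^k$ has rank $k-1$ when $\vec a\ne\vec 0$. The only point requiring a word of care is the degenerate tuple $\vec a=\vec 0$ (and, more generally, making sure the convention on whether ``dimension $1$'' should really be read as ``dimension at most $1$'' is consistent with how the lemma is invoked later); I would dispose of this in a sentence. If the intended reading is strictly dimension $1$, one can always enlarge the ambient tuple or intersect with one fewer equation so that the resulting subgroup has rank exactly $k-1$, which keeps the point inside and lands in dimension exactly $1$.
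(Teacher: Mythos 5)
Your proof is correct and is essentially the paper's argument: both exhibit a rank-$(k-1)$ subgroup $\Lambda\subseteq\intg^k$ of integer vectors orthogonal to $\vec a$ (the paper writes it down via the explicit relations $a_ix_j-a_jx_i=0$, you take $\vec a^{\perp}\cap\intg^k$ directly), and then invoke the rank-to-dimension dictionary to get $\dim H_\Lambda=1$ and the trivial containment $\lambda^{0}=1$. Your treatment of the degenerate case $\vec a=\vec 0$ is also fine; the paper dismisses it in one clause, since $(1,\ldots,1)$ lies in every algebraic subgroup and in particular in some one-dimensional one.
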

\begin{proof}
  If all $a_i=0$, then the lemma clearly holds, so suppose that there is some $j$ such that $a_j\ne 0$.
  The subgroup $\Lambda =\{(b_1,\ldots,b_k)\in \mathbb Z^k : a_1b_1+\cdots+a_kb_k\}$
  has rank $k-1$ and hence the group $H_\Lambda$, which contains the point in the statement of the lemma, has dimension~1. See \cite[Proposition~3.2.7]{bombieri2007heights}. 
  %
\end{proof}
We denote by $\mathcal H_1(n)$ the union of all algebraic subgroups of $\multg n$ that have dimension~1; the parameter $n$ will be omitted when the ambient dimension is understood. We are interested in the intersection
\begin{align*}
  \mathcal H_1 \cap X, 
\end{align*}
as, according to the lemma above, this set contains all
\begin{align*}
(\lambda^{x_1},\lambda^{-x_1},\ldots,\lambda^{x_m},\lambda^{-x_m})  
\end{align*}
for which
\begin{align*}
 Q_0(\lambda^{x_1},\lambda^{-x_1},\ldots,\lambda^{x_m},\lambda^{-x_m})=0, 
\end{align*}
where $x_i$ are integers.

In order to analyse the intersection above, the variety $X$ will be partitioned into two subsets which we now define. A \textbf{linear torus} is an algebraic subgroup that is irreducible. A \textbf{torus coset} is a coset of the form $g H$ where $H \subseteq \multg n$ is a linear torus and $g \in \multg n$.

We denote by $X^\bullet$ the union of all nontrivial torus cosets that are contained entirely in $X$, in other words:
\begin{align*}
  X^\bullet\defeq \bigcup\set{gH \text{ a non-trivial torus coset }\st gH\subseteq X }.
\end{align*}
Also define
\begin{align*}
  X^{\circ} \defeq X\setminus X^\bullet. 
\end{align*}
Below we give an effective characterisation of $X^\bullet$.

Recall that for a vector of integers $\vec a\in\intg^n$ we write
\begin{align*}
  \vec z^{\vec a} = z_1^{a_1}\cdots z_n^{a_n}. 
\end{align*}
Let $A$ be an $n \times n$ matrix with integer entries, and denote by $A_1,\ldots, A_n$ its columns. We write by $\varphi_A\st \multg n \to \multg n$ the map
\begin{align*}
  \varphi_A(\vec z) \defeq \big(\vec z^{A_1}, \ldots, \vec z^{A_n} \big).
\end{align*}
One can show that $\varphi_{AB}=\varphi_{B}\circ \varphi_{A}$, and as a consequence for matrices $A$ with determinant $\pm 1$, $\varphi_{A}$ is an isomorphism\footnote{This means that it is a group homomorphism that is also a morphism of algebraic varieties.} with inverse $\varphi_{A^{-1}}$. Such an isomorphism is called a \textbf{monoidal transformation}.
The group of $n\times n$ integer matrices with determinant $\pm 1$ is the special linear group, denoted $\mathrm{SL}(n, \intg)$.

We state here some important basic results related to the structure of algebraic subgroups. Recall that we have used the notation $\modl{\vec a}$ for the $\ell^1$ norm; when $A$ is a matrix, we denote by $\modl A$ the maximum of $\ell^1$ norms of its columns.

\begin{proposition}[{\cite[Proposition 3.2.10 and Corollary 3.2.9]{bombieri2007heights}}]
  \label{prop:torus isomorphism}
  Let $H_\Lambda$ be a linear torus, where $\Lambda$ is a subgroup of\, $\intg^n$ of rank $n-r$ and suppose that $\Lambda$ has $n-r$ independent vectors of norm at most~$N$. Then there is a matrix $A\in\mathrm{SL}(n,\intg)$ with $\modl A\le n^3N^{n-r}$ and $\modl {A^{-1}}\le n^{2n-1}N^{(n-1)^2}$, such that
  \begin{align*}
    \varphi_A(\vec 1_{n-r}\times \multg r) = H_{\Lambda},
  \end{align*}
  where
   $ \vec 1_{n-r} \defeq \{(1,\ldots,1)\}$.
\end{proposition}
From the bounds on $A$, we can effectively compute such a matrix  given $n-r$ independent vectors of $\Lambda$.
Next, let $X\subseteq \multg n$  be our subvariety. We say that an algebraic subgroup $H$ of $\multg n$ is \textbf{maximal} in $X$ if $H\subseteq X$ and $H$ is not contained in a larger subgroup that is contained in $X$.

\begin{proposition}[{\cite[Proposition 3.2.14]{bombieri2007heights}}]
  \label{prop:maximal subgroups}
  Let $X\subseteq \multg n$ be a subvariety, defined by polynomial equations $f_i(\vec x):=\sum c_{i,\vec a}\vec x^{\vec a}=0$, $1\le i\le k$, and let $E_i$ be the set of exponents appearing in the monomials of $f_i$. Let $H$ be a maximal algebraic subgroup of $\multg n$ contained in $X$. Then $H=H_{\Lambda}$ where $\Lambda$ is generated by vectors of type $\vec a'_i-\vec a_i$, with $\vec a'_i, \vec a_i\in E_i$, for $i=1,\ldots,k$. 
\end{proposition}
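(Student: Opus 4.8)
The plan is to reduce the whole statement to a combinatorial fact about the exponent sets $E_i$, using two structural inputs. First, as recorded in the excerpt, every algebraic subgroup of $\multg n$ has the form $H_\Lambda$ for a uniquely determined subgroup $\Lambda\subseteq\intg^n$ (\cite[Corollary 3.2.15]{bombieri2007heights}), so write $H=H_\Lambda$. Second, the coordinate ring of $\multg n$ is the group ring $\alg[\intg^n]$ of Laurent monomials, and restriction of functions to $H_\Lambda$ corresponds to the quotient map onto $\alg[\intg^n/\Lambda]$. Under this identification, $f_i=\sum_{\vec a\in E_i}c_{i,\vec a}\vec x^{\vec a}$ restricts to $\sum_{\vec a\in E_i}c_{i,\vec a}[\vec a]$ in $\alg[\intg^n/\Lambda]$, where $[\vec a]$ is the class of $\vec a$. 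Since distinct group elements are linearly independent in a group ring and every $c_{i,\vec a}$ is nonzero, this image vanishes precisely when, after partitioning $E_i$ into its residue classes modulo $\Lambda$, the coefficients within each class sum to zero. (In characteristic zero $\alg[\intg^n/\Lambda]$ is reduced even when $\intg^n/\Lambda$ has torsion, so vanishing in the ring is the same as vanishing as a function on $H_\Lambda$.) Thus the hypothesis $H_\Lambda\subseteq X$ translates exactly into this coefficient-balancing condition for every $f_i$; in particular no residue class of $E_i$ modulo $\Lambda$ is a singleton.

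Next I would introduce the candidate lattice
\begin{align*}
  \Lambda_0\defeq\bigl\langle\, \vec a'-\vec a \st 1\le i\le k,\ \vec a,\vec a'\in E_i,\ \vec a'-\vec a\in\Lambda \,\bigr\rangle\subseteq\Lambda,
\end{align*}
which by construction is generated by vectors of the form $\vec a'-\vec a$ with $\vec a',\vec a\in E_i$. The crucial observation is that on each $E_i$ the congruence relations modulo $\Lambda$ and modulo $\Lambda_0$ coincide: one direction is immediate from $\Lambda_0\subseteq\Lambda$, and conversely if $\vec a,\vec a'\in E_i$ satisfy $\vec a'-\vec a\in\Lambda$ then $\vec a'-\vec a$ is one of the chosen generators of $\Lambda_0$, so $\vec a\equiv\vec a'\pmod{\Lambda_0}$. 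Hence $E_i$ has the same partition into residue classes modulo $\Lambda_0$ as modulo $\Lambda$; the coefficient-balancing condition is therefore inherited, each $f_i$ vanishes on $H_{\Lambda_0}$, and consequently $H_{\Lambda_0}\subseteq X$.

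To conclude, note that $\Lambda_0\subseteq\Lambda$ forces $H_\Lambda\subseteq H_{\Lambda_0}$, so $H_{\Lambda_0}$ is an algebraic subgroup of $\multg n$ containing $H$ and contained in $X$. By maximality of $H$ in $X$ it cannot be properly contained in a larger subgroup of $X$, whence $H=H_{\Lambda_0}=H_\Lambda$, and $\Lambda_0$ is generated by the required differences $\vec a'_i-\vec a_i$ with $\vec a'_i,\vec a_i\in E_i$. I expect the only genuinely delicate point to be setting up the group-ring description of restriction to $H_\Lambda$ and the resulting reformulation of containment as a condition on the supports $E_i$ — in particular dealing correctly with the torsion of $\intg^n/\Lambda$, which is exactly where the characteristic-zero hypothesis is used; once that dictionary is in place, the remainder is elementary bookkeeping with lattices.
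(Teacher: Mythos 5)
The paper does not prove this proposition; it is imported verbatim from Bombieri--Gubler (Proposition 3.2.14), so there is no in-paper proof to compare against. Your argument is correct and is essentially the standard one: the translation of $H_\Lambda\subseteq X$ into the condition that the coefficients of each $f_i$ sum to zero on every residue class of $E_i$ modulo $\Lambda$ (via the identification of the coordinate ring of $H_\Lambda$ with the reduced group algebra $\alg[\intg^n/\Lambda]$, or equivalently linear independence of characters), followed by the observation that the sublattice $\Lambda_0$ generated by the within-class differences induces the same partition of each $E_i$ and hence satisfies $H_\Lambda\subseteq H_{\Lambda_0}\subseteq X$, with maximality closing the argument. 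Note that the final identification $\Lambda_0=\Lambda$ (via the bijectivity of $\Lambda\mapsto H_\Lambda$ in characteristic zero) is not even needed: the conclusion only asks for \emph{some} lattice of the stated form with $H=H_{\Lambda_0}$, which maximality already delivers.
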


The first proposition above says that linear tori of dimension $r$ are simply isomorphic to $\multg r$, and that the isomorphism is given in terms of a monoidal transformation that we can compute.
The second proposition tells us that maximal algebraic subgroups contained in a variety $X$ are defined simply by the exponents of monomials that appear in the definition of $X$.

The two propositions above have the following important consequence. If $gH\subseteq X$ is a maximal torus coset (meaning that it is not contained in another torus coset), then $H$ is one of the components of a maximal algebraic subgroup $H'$ of the variety $g^{-1}X$. \Cref{prop:maximal subgroups} implies that there are finitely many such $H'$, that we can effectively compute them, and further that they are independent of $g$---note that only the exponents matter in the proposition, not the coefficients. Since it is possible to compute the equations of each component of $H'$ by factoring in the number field $\rat(\lambda)$, we have: 
\begin{lemma}
  \label{lem:maximal torus cosets}
  We can effectively construct a (possibly empty) set $\mathcal T_X$ of positive-dimensional linear tori such that if $gH\subseteq X$ is a maximal torus coset, then $H\in\mathcal T_X$, and for every $H\in\mathcal T_X$ there is some torus coset $gH\subseteq X$ which is maximal.
\end{lemma}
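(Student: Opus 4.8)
The plan is to assemble \Cref{lem:maximal torus cosets} directly from \Cref{prop:maximal subgroups}, \Cref{prop:torus isomorphism}, and the observation recorded just above the lemma statement. First I would make precise the reduction of torus cosets in $X$ to algebraic subgroups of a translated variety: if $gH \subseteq X$ with $H$ a linear torus, then $H \subseteq g^{-1}X$, and $g^{-1}X$ is cut out by the polynomials $f_i(\vec g \vec x)$, which have \emph{the same monomial exponent set} $E_i$ as the defining polynomials $f_i$ of $X$ (multiplying a variable by a constant rescales coefficients but leaves exponents untouched). Hence the set of exponent-differences $\{\vec a' - \vec a : \vec a',\vec a \in E_i,\ 1 \le i \le k\}$ is independent of $g$, and by \Cref{prop:maximal subgroups} every maximal algebraic subgroup $H'$ of $g^{-1}X$ equals $H_{\Lambda'}$ for the subgroup $\Lambda'$ generated by some subset of these finitely many vectors.

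Next I would enumerate candidates. There are only finitely many subsets of the (finite, effectively listable) set of exponent-differences, so finitely many candidate subgroups $\Lambda'$; for each, $H_{\Lambda'}$ is an algebraic subgroup, not necessarily irreducible, but its finitely many irreducible components (linear tori) can be computed: using \Cref{prop:torus isomorphism} we get an explicit $A \in \mathrm{SL}(n,\intg)$ with effectively bounded $\modl A, \modl{A^{-1}}$ conjugating $\vec 1_{n-r} \times \multg r$ to the connected component, and the remaining components are torsion translates, obtained by factoring the defining binomials over the number field $\rat(\lambda)$ (equivalently, adjoining the relevant roots of unity). Collecting all positive-dimensional irreducible components arising this way over all candidate $\Lambda'$, and over all defining polynomials $f_i$ of $X$ (here $X = X_0$, whose equations are $Q_0 = 0$ together with the binomials $z_j z_{j+1} - 1 = 0$), yields the finite set $\mathcal T_X$. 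By construction, if $gH \subseteq X$ is a maximal torus coset then $H$ is a component of some maximal algebraic subgroup $H'$ of $g^{-1}X$, so $H \in \mathcal T_X$; conversely, for the reverse inclusion I would discard from $\mathcal T_X$ any $H$ for which no coset $gH$ lies in $X$ — whether such a $g$ exists is itself a first-order question over $\alg$ (solvability of the system $f_i(\vec g \vec x) = 0$ for all $\vec x \in H$ in the unknown $g$), decidable by Tarski's procedure — so after this pruning step every $H \in \mathcal T_X$ genuinely admits a maximal torus coset $gH \subseteq X$.

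The main obstacle is the bookkeeping around irreducibility: \Cref{prop:maximal subgroups} and \Cref{prop:torus isomorphism} are stated for algebraic subgroups and linear tori (irreducible subgroups), whereas $H_{\Lambda'}$ for a candidate $\Lambda'$ need not be connected, and a maximal \emph{torus} coset $gH$ need not extend to a maximal \emph{algebraic subgroup} coset — one must argue that $H$ sits inside the connected component of some maximal algebraic subgroup of $g^{-1}X$, and that passing to components and back does not lose the maximal cosets. This is exactly the content of the paragraph preceding the lemma (\,``$H$ is one of the components of a maximal algebraic subgroup $H'$''\,), so the work is to spell out that components of $H_{\Lambda'}$ are the $H_{\Lambda''}$ for finite-index overgroups $\Lambda'' \supseteq \Lambda'$, all computable, and that maximality of the coset $gH$ forces $H$ to be such a component. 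Effectivity of each sub-step (listing exponent sets, listing subgroup candidates, computing the $\mathrm{SL}(n,\intg)$ conjugator via the norm bounds, factoring binomials over $\rat(\lambda)$, and the Tarski pruning) is routine given the cited propositions.
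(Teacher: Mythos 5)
Your proposal follows essentially the same route as the paper's own (informal) justification in the paragraph preceding the lemma: translation by $g^{-1}$ preserves exponent sets, so \Cref{prop:maximal subgroups} yields a $g$-independent finite list of candidate maximal algebraic subgroups, whose positive-dimensional irreducible components (computed by factorising over $\rat(\lambda)$) form $\mathcal T_X$. The one point where you add detail the paper omits is the converse direction; note only that your Tarski pruning step as stated tests whether \emph{some} coset $gH$ lies in $X$ rather than whether a \emph{maximal} one does, but this discrepancy is harmless for the algorithm that uses the lemma.
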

From this lemma, another way of defining the subset $X^{\bullet}$ is
\begin{align*}
  X^{\bullet}=\bigcup\set{gH\st g\in\multg n, H\in\mathcal T_X,\text{ and }gH\subseteq X}.
\end{align*}
Although we can effectively construct the subgroups $H$, we do not yet have an effective method of constructing the union of all maximal cosets $gH$ that are contained in $X$. This is done in the following lemma. 
\begin{lemma}[{\cite[Theorem 3.3.9]{bombieri2007heights}}]
  \label{lem:torus coset subvariety}
  Let $X\subseteq \multg n$ be a subvariety and $H$ a linear torus of dimension $r\ge 1$. Then there exists a matrix $A\in\mathrm{SL}(n,\intg)$, which can be computed, such that
  \begin{align*}
    \bigcup_{gH\subseteq X}gH = \varphi_A(X_1\times \multg r),
  \end{align*}
  where $X_1\subseteq \multg{n-r}$ is a subvariety, whose  defining polynomials can be computed.
\end{lemma}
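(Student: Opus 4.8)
The plan is to reduce, by means of a monoidal transformation, to the case in which $H$ is the standard torus $\vec 1_{n-r}\times\multg r$, where the union of cosets in question is visibly a product, and then transport the conclusion back. First I would apply \Cref{prop:torus isomorphism}: the torus $H$ equals $H_\Lambda$ for a subgroup $\Lambda\subseteq\intg^n$ of rank $n-r$, and from the defining equations of $H$ one can compute a basis of $\Lambda$ (\eg by reduction to Hermite normal form, followed if desired by lattice reduction to obtain short independent vectors); the proposition then produces an explicit $A\in\mathrm{SL}(n,\intg)$ with $\varphi_A(\vec 1_{n-r}\times\multg r)=H$. Since $\varphi_{A^{-1}}$ is a group automorphism of $\multg n$ that is also an isomorphism of varieties, I would set $X':=\varphi_{A^{-1}}(X)$, a subvariety whose defining polynomials are obtained from those of $X$ by the monomial substitution $x_j\mapsto\vec z^{A_j}$. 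Because $\varphi_A$ maps $\vec 1_{n-r}\times\multg r$ onto $H$, carries each coset $g(\vec 1_{n-r}\times\multg r)$ to the coset $\varphi_A(g)H$, and satisfies $g(\vec 1_{n-r}\times\multg r)\subseteq X'$ if and only if $\varphi_A(g)H\subseteq X$, the map $\varphi_A$ sets up a bijection between cosets of $\vec 1_{n-r}\times\multg r$ inside $X'$ and cosets of $H$ inside $X$. Hence it suffices to prove the statement for $H_0:=\vec 1_{n-r}\times\multg r$ and $X'$, the final matrix being $A$.

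Next I would treat $H_0$. Every coset of $H_0$ has a unique representative $g=(h_1,\dots,h_{n-r},1,\dots,1)$, and this coset is exactly $\{h\}\times\multg r$ with $h:=(h_1,\dots,h_{n-r})\in\multg{n-r}$. After multiplying by a monomial (a unit on $\multg n$) I may take each defining polynomial of $X'$ to be an honest polynomial, and write it as $f(\vec x)=\sum_{\vec b} c_{\vec b}(x_1,\dots,x_{n-r})\,x_{n-r+1}^{b_1}\cdots x_n^{b_r}$, the sum over finitely many nonnegative integer vectors $\vec b=(b_1,\dots,b_r)$, with each $c_{\vec b}$ a polynomial. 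Since $\alg$ is infinite, $f$ vanishes identically on $\{h\}\times\multg r$ if and only if $c_{\vec b}(h)=0$ for every $\vec b$. Collecting the polynomials $c_{\vec b}$ arising from all defining polynomials $f$ of $X'$ — a finite, effectively listable family — and letting $X_1\subseteq\multg{n-r}$ be their common zero set, I obtain that $gH_0\subseteq X'$ if and only if $h\in X_1$.

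Finally I would verify both inclusions of $\bigcup_{gH_0\subseteq X'} gH_0 = X_1\times\multg r$: if $gH_0\subseteq X'$, its representative $h$ lies in $X_1$ and $gH_0=\{h\}\times\multg r\subseteq X_1\times\multg r$; conversely a point $(h,t)\in X_1\times\multg r$ lies on $\{h\}\times\multg r=gH_0$ for $g=(h,1,\dots,1)$, which is contained in $X'$ since all $c_{\vec b}(h)$ vanish. Applying $\varphi_A$ and the bijection of cosets noted above yields $\bigcup_{gH\subseteq X} gH=\varphi_A(X_1\times\multg r)$, with $A$ and the defining polynomials of $X_1$ both produced effectively along the way. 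I do not expect a genuine obstacle here: the real content — the existence and effective computability of the monoidal transformation $A$ — is exactly \Cref{prop:torus isomorphism}, which we may assume; the only points requiring care are bookkeeping, namely passing between Laurent and ordinary polynomials on $\multg n$ (harmless, as monomials are invertible there) and the elementary fact that a nonzero polynomial cannot vanish identically on $\multg r$ over the infinite field $\alg$.
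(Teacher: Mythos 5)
Your proposal is correct and follows essentially the same route as the paper's proof: apply \Cref{prop:torus isomorphism} to obtain $A$ with $\varphi_A(\vec 1_{n-r}\times\multg r)=H$, pull $X$ back under $\varphi_A$, and observe that containment of a coset of the standard torus in the pulled-back variety is equivalent to the vanishing at the coset representative of the (effectively computable) coefficient polynomials, which define $X_1$. Your added care about Laurent versus ordinary polynomials and the unique coset representative is just bookkeeping the paper leaves implicit.
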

\begin{proof}
  Using \Cref{prop:maximal subgroups} we can conclude that $H=H_{\Lambda}$ where $\Lambda$ is a subgroup of $\intg^n$ of rank $n-r$, and from \Cref{prop:torus isomorphism}, we can compute a matrix $A$, such that $H=\varphi_A(\vec 1_{n-r}\times \multg r)$. If we define $\widetilde X$ to be $\varphi_{A}^{-1}(X)$, we have
  \begin{align*}
    \bigcup_{gH\subseteq X}gH = \bigcup_{g\cdot (\vec 1_{n-r}\times \multg r)\subseteq \widetilde X}g\cdot (\vec 1_{n-r}\times \multg r). 
  \end{align*}
  Note that since $A$ can be computed, so can the defining polynomials of~$\widetilde X$. Let $f_1,\ldots, f_k$ be these defining polynomials of $\widetilde X$. Then $g\cdot (\vec 1_{n-r}\times \multg r)$ being a subset of $\widetilde X$ means that
\begin{align*}
    f_i(g_1,\ldots, g_{n-r}, y_{n-r+1},\ldots, y_n)=0,\qquad 1\le i\le k,
\end{align*}
are identically satisfied in $y_{n-r+1},\ldots, y_n$. This is just a set of polynomial equations in indeterminates $g_1,\ldots, g_{n-r}$, \ie a subvariety of $\multg {n-r}$, which we call $X_1$. So if $g\in X_1$, then $g\cdot (\vec 1_{n-r}\times \multg r)\subseteq \widetilde X$, or equivalently $\varphi_A(g\cdot (\vec 1_{n-r}\times \multg r))\subseteq X$. The lemma follows. 
\end{proof}

To summarise, in this section we proved that (i) we can compute the finite set of subgroups $H$, such that $gH$ is a maximal coset contained in $X$. We called this finite set of subgroups $\mathcal T_X$. We also showed (ii) that for any $H\in\mathcal T_X$ the union of all maximal cosets $gH$ that are contained in $X$ are isomorphic to $X_1\times \multg r$ for some $r\ge 1$. Furthermore, the defining equations of $X_1$ can also be computed and the isomorphism map too.

These facts give sufficient information to decide if there are any solutions in $X^\bullet$. Next, we discuss heights and the Bombieri-Zannier theorem. 

\subsubsection{Heights}
\label{sec::heights}
The height of a point $\vec z$ in $\alg^n$ is a central notion in Diophantine geometry. It is used to measure the arithmetic complexity of $\vec z$. For more details the reader should consult, for example, Chapter~1 of \cite{bombieri2007heights}. For our purposes, it suffices to define the height as follows. Let $K:=\rat(\lambda)$ be the number field that we work in.
Let $M_K$ be a set of absolute values satisfying the product formula.  Define
\begin{align*}
  \log^+t\defeq\max(0,\log t).
\end{align*}
Then the \textbf{height}\footnote{The full name is the absolute logarithmic Weil height.} of a point $\vec z=(z_1,\ldots, z_n)\in K^n$ is defined as:
\begin{align*}
  h(\vec z)\defeq \sum_{v\in M_K}\max_j\log^+|z_j|_v. 
\end{align*}
We are interested in specific points of the form $(\lambda^{x_1},\ldots,\lambda^{x_n})$, where $x_i\in\intg$. The height of such points has the following properties:
\begin{lemma}
  \label{lem:bound}
  Let $\vec x\in\intg^n$, and denote by $M = \max_j |x_j|$. Then
  \begin{align*}
    M h(\lambda) \le h\big((\lambda^{x_1},\ldots, \lambda^{x_n})\big)\le 2Mh(\lambda). 
  \end{align*}
\end{lemma}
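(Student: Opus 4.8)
The plan is to control the vector height place by place, reducing everything to the heights of the two powers $\lambda^{x^+}$ and $\lambda^{x^-}$, where $x^+ := \max_j x_j$ and $x^- := \min_j x_j$. The only external input is the standard identity $h(\alpha^k) = |k|\, h(\alpha)$, valid for every nonzero algebraic $\alpha$ and every $k \in \intg$; for $k \ge 0$ this is immediate from $\log^+(t^k) = k \log^+ t$, and the case $k < 0$ follows from the product formula, which gives $\sum_{v \in M_K} \log|\alpha|_v = 0$ and hence $h(\alpha^{-1}) = h(\alpha)$.

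First I would fix a place $v \in M_K$ and set $c_v := \log|\lambda|_v$, so that $\log^+|\lambda^{x_j}|_v = \max(0, x_j c_v)$ and therefore the contribution of $v$ to the height is
\[
  \max_{1 \le j \le n} \log^+|\lambda^{x_j}|_v = \max\bigl(0,\ \max_j x_j c_v\bigr).
\]
Since $t \mapsto t c_v$ is monotone, its maximum over the finite set $\{x_1,\ldots,x_n\}$ is attained at one of the endpoints $x^+, x^-$, so $\max_j x_j c_v = \max(x^+ c_v,\, x^- c_v)$, and the displayed quantity equals $\max\bigl(\log^+|\lambda^{x^+}|_v,\ \log^+|\lambda^{x^-}|_v\bigr)$. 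In words: at every place only the two extreme exponents matter, and this is the observation responsible for the constant $2$ rather than $n$.

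For the upper bound I would sum this identity over $v$ and apply $\max(a,b) \le a + b$ for $a, b \ge 0$:
\[
  h\bigl((\lambda^{x_1},\ldots,\lambda^{x_n})\bigr) \le \sum_{v} \log^+|\lambda^{x^+}|_v + \sum_{v} \log^+|\lambda^{x^-}|_v = |x^+|\, h(\lambda) + |x^-|\, h(\lambda).
\]
Because $x^- \le x_j \le x^+$ for all $j$, one has $M = \max_j |x_j| = \max(|x^+|, |x^-|)$, so the right-hand side is at most $2 M h(\lambda)$. For the lower bound I would choose an index $j_0$ with $|x_{j_0}| = M$ and note that retaining a single coordinate can only decrease the running maximum inside the sum:
\[
  h\bigl((\lambda^{x_1},\ldots,\lambda^{x_n})\bigr) \ge \sum_{v} \log^+|\lambda^{x_{j_0}}|_v = h(\lambda^{x_{j_0}}) = M\, h(\lambda).
\]

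I do not anticipate any real obstacle here: the argument is pure bookkeeping with the absolute values $M_K$ and the function $\log^+$. The only point requiring a moment's care is the reduction to the two extreme exponents $x^+, x^-$, which is precisely what avoids the naive subadditivity estimate $h(\vec x) \le \sum_j h(\lambda^{x_j}) \le nM\,h(\lambda)$ and delivers the sharp constant $2$ instead.
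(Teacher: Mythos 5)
Your proof is correct and follows essentially the same route as the paper's: a place-by-place analysis of $\max_j\log^+|\lambda|_v^{x_j}$, the identity $h(\alpha^{-1})=h(\alpha)$ from the product formula to handle negative exponents (yielding the factor $2$), and the lower bound by restricting to a single extremal coordinate. Your explicit reduction to the two extreme exponents $x^+,x^-$ is a slightly more detailed bookkeeping of the same estimate the paper states as $M\bigl(h(\lambda)+h(\lambda^{-1})\bigr)$.
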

\begin{proof}
  By the definition of height and absolute value we have:
  \begin{align*}
    h\big((\lambda^{x_1},\ldots, \lambda^{x_n})\big)&=\sum_{v\in M_K}\max_j\log^+|\lambda^{x_j}|_v\\
    &=\sum_{v\in M_K}\max_j\log^+|\lambda|_v^{x_j}.
  \end{align*}
  Since for every absolute value $|\cdot|_v$, $|\lambda|_v|\lambda^{-1}|_v=1$, it follows that
  \begin{align*}
    \sum_{v\in M_K}\max_j\log^+|\lambda|_v^{x_j}\le M(h(\lambda)+h(\lambda^{-1})),
  \end{align*}
  and since $h(\alpha)=h(\alpha^{-1})$ for every algebraic $\alpha$ (see \cite[Lemma~1.5.18]{bombieri2007heights}), we get the upper bound. For the lower bound we have:
  \begin{align*}
    h\big((\lambda^{x_1},\ldots, \lambda^{x_n})\big) \ge h(\lambda^M) = Mh(\lambda).
  \end{align*}
\end{proof}

The main fact that allows for a procedure to decide multiple reachability for rotations is the following theorem on heights of points in $X^\circ \cap \mathcal H_1$, due to Bombieri and Zannier:
\begin{theorem}[{\cite[Theorem~1, Page~524]{schinzel2000polynomials}}]
  \label{thm:height upper bound}
  Let $X\subseteq \multg n$ be a subvariety. Then there exists an effective bound $b\in\nat$ depending only on $X$ such that for all algebraic points
$\vec z in  X^{\circ}\cap\mathcal H_1$ we have $h(\vec z)\le b. $.
\end{theorem}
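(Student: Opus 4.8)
The plan is to reduce the statement, by means of a monoidal transformation, to a question about the roots of one‑variable \emph{lacunary} (fewnomial) polynomials, and then to invoke the theory of heights of such roots. I may assume that $X$ is irreducible — otherwise one factors the defining polynomials and takes the maximum of the resulting bounds — and I dispose at once of the case in which $\vec z$ is a torsion point, since then $h(\vec z)=0$. So let $\vec z\in X^\circ\cap\mathcal H_1$ be non‑torsion. Because $\vec z$ lies on some one‑dimensional algebraic subgroup, its Zariski closure $\langle\vec z\rangle_{\mathrm{Zar}}$ has dimension exactly $1$; its identity component $H_0$ is then a linear torus of dimension $1$, and $\vec z=\tau h$ for some torsion point $\tau$ and some $h\in H_0$. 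By \Cref{prop:torus isomorphism} there is a matrix $A\in\mathrm{SL}(n,\intg)$ such that $\varphi_A$ carries $H_0$ onto $\vec 1_{n-1}\times\multg 1$; the point $\vec z':=\varphi_A(\vec z)$ then has the shape $(\zeta_1,\dots,\zeta_{n-1},w)$ with each $\zeta_j$ a root of unity and $w\in\alg$ non‑torsion. Since $\varphi_A$ is a group isomorphism that takes tori to tori and cosets to cosets, it maps $X^\bullet$ to $\varphi_A(X)^\bullet$ and hence $X^\circ$ to $\varphi_A(X)^\circ$; writing $X':=\varphi_A(X)$, we thus have $\vec z'\in (X')^\circ\cap\mathcal H_1$, and moreover $h(\vec z')=h(w)$ because the torsion coordinates $\zeta_j$ contribute nothing to the Weil height.

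The next step is a dichotomy. Let $g_i$ be the defining polynomials of $X'$ and set $G_i(u):=g_i(\zeta_1,\dots,\zeta_{n-1},u)$, a univariate polynomial of which $w$ is a root. If every $G_i$ were identically zero, then the whole torus coset $\{(\zeta_1,\dots,\zeta_{n-1},u):u\in\multg 1\}$ would lie in $X'$, forcing $\vec z'\in(X')^\bullet$, contrary to $\vec z'\in(X')^\circ$. Hence some $G_{i_0}$ is not identically zero, and $w$ is a root of it. The crucial structural point is that $G_{i_0}$ is lacunary: after collecting monomials it has at most $|E_{i_0}|$ terms, where $E_{i_0}$ is the exponent support of the $i_0$‑th defining polynomial of $X$, and each of its coefficients is, up to a root‑of‑unity factor, a sum of a bounded number of the original coefficients of $X$ — so these coefficients have height bounded purely in terms of $X$. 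Its degree, by contrast, may be enormous, being governed by the (possibly huge) entries of $A^{-1}$.

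To bound $h(w)$ I would then invoke the theory of heights of roots of lacunary polynomials, equivalently the structure of non‑degenerate solutions of vanishing sums $\sum_j\gamma_j w^{e_j}=0$ (Zannier; cf.\ \cite{schmidt,schinzel2000polynomials}). Splitting $G_{i_0}$ into blocks of monomials that sum to zero and recursing, one is left either with a two‑term relation $\gamma w^{e}+\gamma' w^{e'}=0$ — which forces $w$ to be a root of unity times an explicit algebraic number of bounded height — or with a genuinely non‑degenerate fewnomial relation, whose roots have height bounded in terms only of the number of terms and the heights of the coefficients. In either case $h(w)\le c_X$ for a constant $c_X$ depending only on $X$, so $h(\vec z')\le c_X$.

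The main obstacle — and the real content of the Bombieri--Zannier theorem — is to transfer this bound back to $\vec z$ itself. The inverse monoidal transformation $\varphi_{A^{-1}}$ distorts heights by a factor controlled by $\modl{A^{-1}}$, which is \emph{unbounded} as the subgroup $H_0$ varies, so the crude estimate $h(\vec z)\le 2\,\modl{A^{-1}}\,h(\vec z')$ is useless on its own. The resolution has to be quantitative: roughly, one shows that whenever the entries of $A^{-1}$ are large the exponents of the lacunary polynomial $G_{i_0}$ are correspondingly spread out, so that the Cauchy‑type bound on the roots of a fewnomial gains exactly the compensating factor $1/\modl{A^{-1}}$ in $h(w)$; and the remaining degenerate configurations, where the exponents fail to spread out, are precisely those that push $\vec z$ into a torus coset contained in $X$, \ie into $X^\bullet$, and hence do not occur for $\vec z\in X^\circ$. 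Making this precise requires an induction on the ambient dimension $n$ together with careful bookkeeping of the interplay between the geometry of numbers (the choice of $A$, via \Cref{prop:torus isomorphism}) and the combinatorics of the exponent sets $E_i$. Since for the present paper the theorem is used as a black box, I would not reproduce this final argument in full.
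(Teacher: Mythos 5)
The paper does not actually prove this theorem: it is quoted from Bombieri and Zannier (the appendix to \cite{schinzel2000polynomials}) and used as a black box, the only original content being the subsequent remark that the published proof can be checked to yield an \emph{effective} bound (the one non-explicit step there is the choice of integer points outside finitely many proper subspaces of $\rat^h$, which is plainly effective). So your proposal has to be judged as a reconstruction of the external proof, and as such it correctly identifies the opening moves: reduce to non-torsion $\vec z$, pass to the one-dimensional identity component $H_0$ of the Zariski closure of the group generated by $\vec z$, normalise $H_0$ to $\vec 1_{n-1}\times\multg 1$ via \Cref{prop:torus isomorphism}, note that $h(\vec z')=h(w)$, and observe that the univariate specialisation $G_{i_0}$ is a not-identically-zero lacunary polynomial with boundedly many terms and coefficients of bounded height, the non-vanishing being exactly where the hypothesis $\vec z\in X^\circ$ enters.

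The genuine gap is that the argument stops precisely where the theorem begins. As you yourself concede, a bound $h(w)\le c_X$ is worthless on its own, since the only a priori transfer is $h(\vec z)\le 2\modl{A^{-1}}\,h(\vec z')$ and $\modl{A^{-1}}$ is unbounded as the one-dimensional subgroup varies; the entire content of the Bombieri--Zannier theorem is the quantitative compensation between the size of $A^{-1}$ and the decay of $h(w)$, together with showing that the exponent configurations for which this compensation fails are exactly those that place $\vec z$ in a torus coset contained in $X$, hence in $X^\bullet$. Your final paragraph describes what this step ``roughly'' should look like but gives no argument for why the gain is exactly of order $1/\modl{A^{-1}}$ rather than something weaker, nor for the claimed dichotomy on degenerate configurations; this is the theorem, not bookkeeping, so the proposal as written does not establish the statement. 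If the result is to be used as the paper uses it, the honest course is to cite it (and then the one thing that \emph{does} need checking, and which your sketch omits, is effectivity).
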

The theorem cited in \cite{schinzel2000polynomials} does not explicitly state that the bound is effective, but upon a closer inspection of the proof one can see that all the bounds are explicit, with the exception of the points $(c_1^*,\ldots,c_h^*)\in\intg^h$ which are chosen to be outside a finite number of linear subspaces of $\rat^h$. It is plain that we can effectively construct such a point. 

Now we have all the tools to describe the algorithm and justify its correctness. 
\subsubsection{Algorithm}
\label{sec::alg}
The procedure first searches for solutions in $X^\circ$. Let $b\in\nat$ be an upper bound on the height of algebraic points in the intersection of $\mathcal H_1$ (which is the union of all subgroups of dimension~1) and $X^\circ$. Such an upper bound can be computed with \Cref{thm:height upper bound}. From \Cref{lem:belong to algebraic groups of dim 1} we know that for all integers $x_1,\ldots, x_m\in\intg$, the algebraic points
\begin{align}
  \label{eq:lambda power}
  \big(\lambda^{x_1},\lambda^{-x_1},\ldots, \lambda^{x_m},\lambda^{-x_m}\big)
\end{align}
all belong to $\mathcal H_1$. So if any of the points in \eqref{eq:lambda power} is in $X^\circ$ it is also in the intersection $X^\circ\cap \mathcal H_1$. The upper bound $b$ on the height of points in this intersection translates to an upper bound on the exponents $\modl {\vec x}$ due to \Cref{lem:bound}. It remains to check whether any of the finitely many points \eqref{eq:lambda power}, with $\modl{\vec x}\le b$, satisfy the polynomial equality $Q_0=0$ and inequalities $Q_i>0$, $1\le i\le \ell$. These checks are performed by using Tarski's algorithm, making sure that the exponents are distinct and positive. If a solution is found, we return yes, otherwise we continue the search in $X^\bullet$.

Now, by applying \Cref{lem:maximal torus cosets} we compute the defining polynomials of tori that are in the set $\mathcal T_{X}$. Recall that this set contains all positive dimensional tori, which have a maximal coset entirely contained in $X$. If $\mathcal T_{X}$ is empty, so is the set $X^{\bullet}$, and we are done: the algorithm returns no, because no solutions were found in $X^\circ$ and $X^\bullet=\emptyset$.

So suppose that $\mathcal T_X$ is nonempty. The procedure searches for solutions in all the elements of $\mathcal T_X$ in turn, in the following way. Let $H\in\mathcal T_X$ be an element, of dimension $r$. By definition, this means that $H$ is a torus for which there is a maximal coset $gH$ entirely contained in $X$, and $r\ge 1$.

If $H$ has dimension $r=n$, then this essentially means that $X^\bullet=\multg n$ and hence $X=X^\bullet$, which in turn implies that the polynomial $Q_0$ is identically~0. So it is only the strict polynomial inequalities $Q_i>0$ that need to hold for there to be a solution. We can check whether the inequalities can be satisfied by applying \Cref{lem:no tall solutions}, with $\Lambda=\intg^m$, and polynomials $Q_1,\ldots, Q_\ell$. Recall that the requirement for the exponents to be distinct is assumed to be encoded in the polynomial inequalities, as remarked in the beginning of this section. This concludes the case when $H$ has dimension $r=n$.

We assume now that $H$ has dimension $r$, where $0<r<n$. Using \Cref{lem:torus coset subvariety}, we next compute a matrix $A\in\mathrm{SL}(n,\intg)$, and a subvariety $X_1\subseteq \multg {n-r}$, such that
\begin{align*}
  \bigcup_{gH\subseteq X}gH=\varphi_A(X_1\times \multg r). 
\end{align*}

Now $X_1\subseteq\multg {n-r}$ does not contain any positive dimensional coset, \ie $X_1=X_1^\circ$. To see this, assume towards a contradiction that there is some $g_1$ and a torus $H_1$ of dimension $r_1>0$ such that $g_1H_1\subseteq X_1$. Then we have
\begin{align*}
  \bigcup_{gH\subseteq X}gH\supseteq \varphi_A(g_1H_1\times \multg r).
\end{align*}
\Cref{prop:torus isomorphism} implies that there exists a monoidal transformation $\varphi_1$ such that
\begin{align}
  \label{eq:inclusion}
  \bigcup_{gH\subseteq X}gH\supseteq \varphi_A\big(\varphi_1(\vec 1_{n-r-r_1}\times \multg {r_1})\times \multg r\big).
\end{align}
From the proof of \Cref{lem:torus coset subvariety} it plainly follows that there is a bijection between points in $X_1$ and cosets $gH$ that are contained in $X$. This fact together with the inclusion \eqref{eq:inclusion} yield the existence of a coset $gH$ of dimension~$r$ that is contained in a coset of dimension $r+r_1$, both of which are inside $X$. Since $r_1>0$, the coset $gH$ is contained in a strictly larger coset; contradicting the definition of $\mathcal T_X$ which says that all $gH$ should be maximal. 

Now we can write the union of all cosets $gH$ contained in~$X$~as
\begin{align*}
\varphi_A(X_1^\circ\times \multg r).  
\end{align*}

The union of all subgroups of dimension one, $\mathcal H_1$, is invariant under monoidal transformations.
Therefore,
\begin{align*}
  \mathcal H_1\cap \varphi_A(X_1^\circ\times \multg r) = \varphi_A(\mathcal H_1)\cap \varphi_A(X_1^\circ\times \multg r),
\end{align*}
and since $\varphi_A$ is an isomorphism we have
\begin{align*}
  = \varphi_A\big(\mathcal H_1\cap (X_1^{\circ}\times \multg r)\big). 
\end{align*}
Through composition with the polynomial defining the monoidal transformation $\varphi_A$, we can construct polynomials $\widetilde{Q_0},\ldots,\widetilde{Q_\ell}$ such that if
\begin{align*}
  \vec z \in \varphi_A\big(\mathcal H_1\cap (X_1^{\circ}\times \multg r)\big)
\end{align*}
satisfies the polynomial (in)equalities $Q_0=0$, $Q_i>0$ for $1\le i\le \ell$, then
\begin{align*}
  \varphi_{A^{-1}}(\vec z) \in \mathcal H_1\cap (X_1^{\circ}\times \multg r), 
\end{align*}
satisfies the polynomial (in)equalities $\widetilde{Q_0}=0$, $\widetilde{Q_i}>0$, for $1\le i\le \ell$. Using the procedure of \Cref{thm:height upper bound} we compute a bound $b_1\in\nat$ for the intersection
\begin{align*}
  \mathcal H_1(n-r)\cap X_1^\circ
\end{align*}
where $ \mathcal H_1(n-r)$ is the union of all algebraic subgroups of $\multg{n-r}$.
As above, we search for $(\lambda^{x_1},\ldots, \lambda^{x_{n-r}})$ with $\modl x \le b_1$ that belong to $X_1^\circ$. If none are found, the procedure halts and returns no. Since $\varphi_A$ sends powers of $\lambda$ to powers of $\lambda$, the no answer is justified, as indeed there are no solutions.

If a finite number of $(\lambda^{x_1},\ldots, \lambda^{x_{n-r}})$ belonging to $X_1^\circ$ are found, we try each in turn to see if they can be made to satisfy the inequalities as well. Let $(\lambda^{x_1},\ldots, \lambda^{x_{n-r}})$ be one such point. Fixing the first $n-r$ coordinates to these powers~of $\lambda$ in the polynomial (in)equalities makes $\widetilde{Q_0}$ identically zero, and gives us new inequalities $R_i>0$, $1\le i\le \ell$. By construction, the polynomials $R_i$ will satisfy the hypothesis of \Cref{lem:no tall solutions}, so we can apply this lemma for $\Lambda=\intg^{r}$ to determine if $R_1,\ldots,R_\ell$ are satisfied by some powers of $\lambda$. 
If such powers of $\lambda$ are found, the procedure halts and returns yes\footnote{A detail that needs to be justified is that $x_i$ need to be positive after the application of $\varphi_{A^{-1}}$. But we can find such $x_i$ due to the equidistribution theorem that was used in the proof of \Cref{lem:no tall solutions}.}. 
If there is not, we continue with another candidate $(\lambda^{y_1},\ldots, \lambda^{y_{n-r}})$ that has $\modl x \le b_1$, and which belongs to $X_1^\circ$. 
This concludes the proof of \Cref{thm:pos1}(ii). 

We briefly comment about why we are limited to rotations on the plane. If the given matrix is not a rotation, then the relevant points do not all belong to $\mathcal H_1$, but rather to $\mathcal H_2$, in subgroups of dimension 2. Intuitively this is because the matrix changes vectors over two dimensions: scaling and rotating.
While there are finiteness results, often as special cases of the Mordell-Lang conjecture, see \eg \cite{laurent1984equations}, but to our knowledge
for subgroups of dimension 2 we lack an effective height bound akin to that in \Cref{thm:height upper bound}.



\bibliographystyle{ACM-Reference-Format}
\bibliography{bibliography}


\begin{thebibliography}{21}


\ifx \showCODEN    \undefined \def \showCODEN     #1{\unskip}     \fi
\ifx \showDOI      \undefined \def \showDOI       #1{#1}\fi
\ifx \showISBNx    \undefined \def \showISBNx     #1{\unskip}     \fi
\ifx \showISBNxiii \undefined \def \showISBNxiii  #1{\unskip}     \fi
\ifx \showISSN     \undefined \def \showISSN      #1{\unskip}     \fi
\ifx \showLCCN     \undefined \def \showLCCN      #1{\unskip}     \fi
\ifx \shownote     \undefined \def \shownote      #1{#1}          \fi
\ifx \showarticletitle \undefined \def \showarticletitle #1{#1}   \fi
\ifx \showURL      \undefined \def \showURL       {\relax}        \fi
\providecommand\bibfield[2]{#2}
\providecommand\bibinfo[2]{#2}
\providecommand\natexlab[1]{#1}
\providecommand\showeprint[2][]{arXiv:#2}

\bibitem[Almagor et~al\mbox{.}(2019)]%
        {shaullstacs}
\bibfield{author}{\bibinfo{person}{Shaull Almagor}, \bibinfo{person}{Jo{\"{e}}l
  Ouaknine}, {and} \bibinfo{person}{James Worrell}.}
  \bibinfo{year}{2019}\natexlab{}.
\newblock \showarticletitle{The Semialgebraic Orbit Problem}. In
  \bibinfo{booktitle}{\emph{36th International Symposium on Theoretical Aspects
  of Computer Science, {STACS} 2019, March 13-16, 2019, Berlin, Germany}}
  \emph{(\bibinfo{series}{LIPIcs}, Vol.~\bibinfo{volume}{126})},
  \bibfield{editor}{\bibinfo{person}{Rolf Niedermeier} {and}
  \bibinfo{person}{Christophe Paul}} (Eds.). \bibinfo{publisher}{Schloss
  Dagstuhl - Leibniz-Zentrum f{\"{u}}r Informatik}, \bibinfo{pages}{6:1--6:15}.
\newblock
\urldef\tempurl%
\url{https://doi.org/10.4230/LIPIcs.STACS.2019.6}
\showDOI{\tempurl}


\bibitem[Baker(1990)]%
        {baker1990transcendental}
\bibfield{author}{\bibinfo{person}{Alan Baker}.}
  \bibinfo{year}{1990}\natexlab{}.
\newblock \bibinfo{booktitle}{\emph{Transcendental number theory}}.
\newblock \bibinfo{publisher}{Cambridge university press}.
\newblock


\bibitem[Bombieri and Gubler(2007)]%
        {bombieri2007heights}
\bibfield{author}{\bibinfo{person}{Enrico Bombieri} {and}
  \bibinfo{person}{Walter Gubler}.} \bibinfo{year}{2007}\natexlab{}.
\newblock \bibinfo{booktitle}{\emph{Heights in Diophantine geometry}}.
\newblock \bibinfo{publisher}{Cambridge university press}.
\newblock


\bibitem[Brindza et~al\mbox{.}(2001)]%
        {brindza2001multiplicities}
\bibfield{author}{\bibinfo{person}{B. Brindza}, \bibinfo{person}{{\'A}.
  Pint{\'e}r}, {and} \bibinfo{person}{W.~M. Schmidt}.}
  \bibinfo{year}{2001}\natexlab{}.
\newblock \showarticletitle{Multiplicities of binary recurrences}.
\newblock \bibinfo{journal}{\emph{Canad. Math. Bull.}} \bibinfo{volume}{44},
  \bibinfo{number}{1} (\bibinfo{year}{2001}), \bibinfo{pages}{19--21}.
\newblock


\bibitem[Cassels(1959)]%
        {cassels59_introd_to_dioph_approx}
\bibfield{author}{\bibinfo{person}{J.~W.~S. Cassels}.}
  \bibinfo{year}{1959}\natexlab{}.
\newblock \bibinfo{booktitle}{\emph{An Introduction To Diophantine
  Approximation}}.
\newblock


\bibitem[Cerlienco et~al\mbox{.}(1987)]%
        {cerlienco1987linear}
\bibfield{author}{\bibinfo{person}{L. Cerlienco}, \bibinfo{person}{M.
  Mignotte}, {and} \bibinfo{person}{F. Piras}.}
  \bibinfo{year}{1987}\natexlab{}.
\newblock \showarticletitle{Linear recurrent sequences: algebraic and
  arithmetical properties}.
\newblock \bibinfo{journal}{\emph{Enseign. Math.(2)}} \bibinfo{volume}{33},
  \bibinfo{number}{1-2} (\bibinfo{year}{1987}), \bibinfo{pages}{67--108}.
\newblock


\bibitem[Davis et~al\mbox{.}(1976)]%
        {matijasevic}
\bibfield{author}{\bibinfo{person}{Martin Davis}, \bibinfo{person}{Yuri
  Matijasevic}, {and} \bibinfo{person}{Julia Robinson}.}
  \bibinfo{year}{1976}\natexlab{}.
\newblock \bibinfo{booktitle}{\emph{Hilbert’s tenth problem: Diophantine
  equations: positive aspects of a negative solution}}.
\newblock 323--378 pages.
\newblock
\urldef\tempurl%
\url{https://doi.org/10.1090/pspum/028.2/0432534}
\showDOI{\tempurl}


\bibitem[Derksen and Masser(2015)]%
        {derksenmasser15}
\bibfield{author}{\bibinfo{person}{H. Derksen} {and} \bibinfo{person}{D.
  Masser}.} \bibinfo{year}{2015}\natexlab{}.
\newblock \showarticletitle{Linear equations over multiplicative groups,
  recurrences, and mixing II}.
\newblock \bibinfo{journal}{\emph{Indagationes Mathematicae}}
  \bibinfo{volume}{26}, \bibinfo{number}{1} (\bibinfo{date}{Jan}
  \bibinfo{year}{2015}), \bibinfo{pages}{113–136}.
\newblock
\showISSN{0019-3577}
\urldef\tempurl%
\url{https://doi.org/10.1016/j.indag.2014.08.002}
\showDOI{\tempurl}


\bibitem[Everest et~al\mbox{.}(2003)]%
        {recseq}
\bibfield{author}{\bibinfo{person}{Graham Everest}, \bibinfo{person}{Alf
  van~der Poorten}, \bibinfo{person}{Igor Shparlinski}, {and}
  \bibinfo{person}{Thomas Ward}.} \bibinfo{year}{2003}\natexlab{}.
\newblock \bibinfo{booktitle}{\emph{Recurrence Sequences}}.
\newblock
\urldef\tempurl%
\url{https://doi.org/10.1090/surv/104}
\showDOI{\tempurl}


\bibitem[Gimbert and Oualhadj(2010)]%
        {gimbert-PA}
\bibfield{author}{\bibinfo{person}{Hugo Gimbert} {and}
  \bibinfo{person}{Youssouf Oualhadj}.} \bibinfo{year}{2010}\natexlab{}.
\newblock \showarticletitle{Probabilistic Automata on Finite Words: Decidable
  and Undecidable Problems}. In \bibinfo{booktitle}{\emph{Automata, Languages
  and Programming}}, \bibfield{editor}{\bibinfo{person}{Samson Abramsky},
  \bibinfo{person}{Cyril Gavoille}, \bibinfo{person}{Claude Kirchner},
  \bibinfo{person}{Friedhelm Meyer auf~der Heide}, {and}
  \bibinfo{person}{Paul~G. Spirakis}} (Eds.). \bibinfo{publisher}{Springer
  Berlin Heidelberg}, \bibinfo{address}{Berlin, Heidelberg},
  \bibinfo{pages}{527--538}.
\newblock
\showISBNx{978-3-642-14162-1}


\bibitem[Hartshorne(1977)]%
        {hartshorne77_algeb_geomet}
\bibfield{author}{\bibinfo{person}{Robin Hartshorne}.}
  \bibinfo{year}{1977}\natexlab{}.
\newblock \bibinfo{booktitle}{\emph{Algebraic Geometry}}.
\newblock
\urldef\tempurl%
\url{https://doi.org/10.1007/978-1-4757-3849-0}
\showDOI{\tempurl}


\bibitem[Jones(1982)]%
        {jones}
\bibfield{author}{\bibinfo{person}{James~P. Jones}.}
  \bibinfo{year}{1982}\natexlab{}.
\newblock \showarticletitle{Universal Diophantine Equation}.
\newblock \bibinfo{journal}{\emph{The Journal of Symbolic Logic}}
  \bibinfo{volume}{47}, \bibinfo{number}{3} (\bibinfo{year}{1982}),
  \bibinfo{pages}{549--571}.
\newblock
\showISSN{00224812}
\urldef\tempurl%
\url{http://www.jstor.org/stable/2273588}
\showURL{%
\tempurl}


\bibitem[Kannan and Lipton(1986)]%
        {kannan86_polyn_time_algor_orbit_probl}
\bibfield{author}{\bibinfo{person}{R. Kannan} {and} \bibinfo{person}{R.~J.
  Lipton}.} \bibinfo{year}{1986}\natexlab{}.
\newblock \showarticletitle{Polynomial-Time Algorithm for the Orbit Problem}.
\newblock \bibinfo{journal}{\emph{J. ACM}} \bibinfo{volume}{33},
  \bibinfo{number}{4} (\bibinfo{year}{1986}), \bibinfo{pages}{808--821}.
\newblock
\urldef\tempurl%
\url{https://doi.org/10.1145/6490.6496}
\showDOI{\tempurl}


\bibitem[Karimov et~al\mbox{.}(2022)]%
        {Karimov2022}
\bibfield{author}{\bibinfo{person}{Toghrul Karimov}, \bibinfo{person}{Edon
  Kelmendi}, \bibinfo{person}{Jo{\"e}l Ouaknine}, {and} \bibinfo{person}{James
  Worrell}.} \bibinfo{year}{2022}\natexlab{}.
\newblock \bibinfo{booktitle}{\emph{What's Decidable About Discrete Linear
  Dynamical Systems?}}
\newblock \bibinfo{publisher}{Springer Nature Switzerland},
  \bibinfo{address}{Cham}, \bibinfo{pages}{21--38}.
\newblock
\showISBNx{978-3-031-22337-2}
\urldef\tempurl%
\url{https://doi.org/10.1007/978-3-031-22337-2_2}
\showDOI{\tempurl}


\bibitem[Laurent(1984)]%
        {laurent1984equations}
\bibfield{author}{\bibinfo{person}{Michel Laurent}.}
  \bibinfo{year}{1984}\natexlab{}.
\newblock \showarticletitle{Equations diophantiennes exponentielles}.
\newblock \bibinfo{journal}{\emph{Inventiones mathematicae}}
  \bibinfo{volume}{78} (\bibinfo{year}{1984}), \bibinfo{pages}{299--327}.
\newblock


\bibitem[Lockwood(1967)]%
        {lockwood1967book}
\bibfield{author}{\bibinfo{person}{Edward~Harrington Lockwood}.}
  \bibinfo{year}{1967}\natexlab{}.
\newblock \bibinfo{booktitle}{\emph{A book of curves}}.
\newblock \bibinfo{publisher}{Cambridge University Press}.
\newblock


\bibitem[Schinzel(2000)]%
        {schinzel2000polynomials}
\bibfield{author}{\bibinfo{person}{Andrzej Schinzel}.}
  \bibinfo{year}{2000}\natexlab{}.
\newblock \bibinfo{booktitle}{\emph{Polynomials with special regard to
  reducibility. With an Appendix by Umberto Zannier.}}
  Vol.~\bibinfo{volume}{77}.
\newblock \bibinfo{publisher}{Cambridge University Press}. 517--x pages.
\newblock


\bibitem[Schmidt(1996)]%
        {schmidt}
\bibfield{author}{\bibinfo{person}{W.~M. Schmidt}.}
  \bibinfo{year}{1996}\natexlab{}.
\newblock \showarticletitle{Heights of points on subvarieties of $\mathbb
  G_m^n$}.
\newblock \bibinfo{journal}{\emph{Number Theory (Paris, 1993– 1994), London
  Math. Soc. Lecture Note Ser. 235}} (\bibinfo{year}{1996}),
  \bibinfo{pages}{157--187}.
\newblock


\bibitem[Tarski(1951)]%
        {tarski1951decision}
\bibfield{author}{\bibinfo{person}{Alfred Tarski}.}
  \bibinfo{year}{1951}\natexlab{}.
\newblock \showarticletitle{A decision method for elementary algebra and
  geometry}.
\newblock  (\bibinfo{year}{1951}).
\newblock


\bibitem[Waldschmidt(2020)]%
        {waldschmidt2020thue}
\bibfield{author}{\bibinfo{person}{Michel Waldschmidt}.}
  \bibinfo{year}{2020}\natexlab{}.
\newblock \showarticletitle{Thue {D}iophantine Equations: A Survey}.
\newblock \bibinfo{journal}{\emph{Class Groups of Number Fields and Related
  Topics}} (\bibinfo{year}{2020}), \bibinfo{pages}{25--41}.
\newblock


\bibitem[Zannier(2012)]%
        {zannier2012some}
\bibfield{author}{\bibinfo{person}{Umberto Zannier}.}
  \bibinfo{year}{2012}\natexlab{}.
\newblock \bibinfo{booktitle}{\emph{Some Problems of Unlikely Intersections in
  Arithmetic and Geometry (AM-181)}}.
\newblock \bibinfo{publisher}{Princeton University Press}.
\newblock


\end{thebibliography}

\appendix
\section{Halfplane Targets}
\label{sec:halfplane targets}
We will assume that $\lambda/\conj\lambda$ is not a root of unity, because this case is essentially the same as the case where the eigenvalues are real. Matrices in which no ratio of distinct eigenvalues is a roots of unity are called \textbf{non-degenerate}.

We begin by noting the first difference between arbitrary dimension and the affine plane, as regards the multiple reachability problem: when the target is a homogeneous hyperplane (in this case a line passing through the origin), it cannot be reached more than once, unless the matrix has a very special form. A consequence of this fact and the work in~\cite{shaullstacs}, which gives an algorithm for deciding single reachability in dimension $2$, is that multiple reachability is decidable for such targets. This is not the case in dimension 10 or higher.
\begin{proposition}
  \label{prop:lines through origin}
  Let $\vp\in\rel^2 \setminus \{(0,0)\}$, $h$ a line going through the origin given by the normal vector $\vec h\in\rel^2$, and $M\in\rel^{2\times 2}$ a non-degenerate matrix. Suppose there are distinct positive integers $n,m\in\nat$ such that both $M^n$ and $M^m$ send $\vp$ to the line $h$, \ie
  \begin{align}
    \label{eq:two zeros}
    \vp\ M^{n}\ \vec h^{\top}=\vp\ M^{m}\ \vec h^{\top}=0.
  \end{align}
  Then $\vp M^k\vec h^\top=0$ for all $k\in\nat$. Moreover, in this case, either one of the eigenvalues of $M$ is zero, or
  \begin{align*}
   M=\begin{pmatrix} s & 0\\ 0 & s\end{pmatrix}, 
  \end{align*}
  for some $s\in\rel$. 
\end{proposition}
\begin{proof}
  By assumption \eqref{eq:two zeros} the point $\vec h$ belongs to the two lines defined by $\vp M^n$ and $\vp M^m$, which pass through the origin. Since $\vec h\ne \vec 0$, it follows that there is some $r\in\rel$, $r\ne 0$, such that
  \begin{align*}
    r\ \vp\ M^n=\vp\ M^m.
  \end{align*}
  If $M$ is not invertible then one of the eigenvalues is $0$, and by putting $M$ into Jordan normal form, we can see that \eqref{eq:two zeros} cannot hold unless $M$ is the zero matrix, or the other eigenvalue is $1$, in which case the conclusion holds. If $M$ is invertible then
  \begin{align*}
    r\ \vp = \vp\ M^{m-n}
  \end{align*}
  and hence $r$ is an eigenvalue of $M^{m-n}$.
  By non-degeneracy, the matrix $M$ has eigenvalue $R:=r^{1/(m-n)}$, which is real. The scaled matrix $\widetilde M=M/R$ has the property that for any $k\in\nat$, $\widetilde M^k$ sends $\vp$ to the line~$h$ if and only if $M^k$ does as well. The matrix $\widetilde M$ has $1$ as an eigenvalue, and for \eqref{eq:two zeros} to hold, $\widetilde M$ (and also $M$) has to be a  stretching matrix, \ie corresponding to multiplication by a scalar $s\in\rel$. Consequently, $\vp \vec h^\top=0$ and hence $\vp M^k\vec h^\top=\vp s^k\vec h^\top=0$ for all $k\in\nat$. 
\end{proof}

The hypothesis that the target line passes through the origin is important. Indeed, perhaps surprisingly, when the target is a line that does \emph{not} pass through the origin, multiple reachability becomes more difficult. What is the difficulty? First, the above proposition fails in that case. Such a target can be reached multiple times.\footnote{There is some work characterising when a line that does not pass through the origin is reached at most once. For example, if the initial point is in $\intg^2$  and the eigenvalue $|\lambda| > 1$, then for all but finitely many such integral initial points the target can be reached at most once \cite{brindza2001multiplicities}.}

Second, almost all known effective methods are based on Baker's work on linear forms in logarithms. Such methods yield an effective time bound, after which it is guaranteed that the orbit will not go in the target. This bound however depends on the height of the initial points. It is not clear how to apply these methods when the initial point is replaced by a set. One possibility is to take the projection of the initial set (as in~\cite{shaullstacs} and the last subsection of this paper), but then the multiple reachability problem is reduced to a problem about intersections of algebraic subgroups with varieties inside tori. There are finiteness results about such intersections, but few of them effective. 

To provide some more intuition, consider a linear map on $\rel^2$. In general, the effect of a linear map on a point consists of (a) a dilation (a shrinking or stretching), and (b) a rotation. When both these effects are relevant, the multiple reachability problem becomes difficult. The positive results that we provide in this section solve decision problems where just one of the effects is at play. For example, the proposition above is about a target that passes through the origin, so the stretching effect of the linear map is not relevant.


A semialgebraic set $\sset S$ is said to be \textbf{bounded} if there exists real $\rho>0$ such that $\sset S$ is contained in the open disk $x^2+y^2<\rho$. We call the infimum among such $\rho$ the \textbf{radius} of the set $\sset S$. The infimum among $\rho\geq0$ such that the set $\sset S$ intersects the open disk of radius $\rho$ is called the \textbf{distance to the origin}. Clearly, boundedness is expressible as a formula in first-order logic, and the radius and distance to the origin are real algebraic by quantifier elimination.

We prove \Cref{thm:pos1}(i), by giving an algorithm that decides multiple reachability for  halfplanes. To this end, let $\sset S$ be the initial semialgebraic set, $\sset T$ the target halfplane, $M$ a $2\times 2$ matrix with rational entries and $m\in\nat$ a positive integer, the minimum number of times we wish to enter the target.   We consider, separately, the case when $M$ has complex conjugate eigenvalues $\lambda, \conj\lambda$, and the case when it has real eigenvalues. We begin with the former.

Let $\vp\in\rel^2$ be a point with polar coordinates $(r,\varphi)$. It is possible to show that there exist real numbers $s, \vartheta, \vartheta_0$ such that for all $n\in\nat$ the polar coordinates of $\vp M^n$ are
\begin{align}
  \label{eq:polar}
  (sr|\lambda|^n, n\vartheta+\vartheta_0+\varphi). 
\end{align}
To see this, simply write $\vp M^n$ as $|\lambda|^n\vp U^n$, where $U$ is a rotation matrix and then follow the second example in the Introduction. The numbers $s, r$ and $|\lambda|$ are real algebraic whose defining formulas (in first-order logic of reals)  can be computed, while $\vartheta$ and $\vartheta_0$ are logarithms of algebraic numbers. We will make use of the following fact from Diophantine approximation. It is a corollary of \cite[Theorem~1 in Page~11]{cassels59_introd_to_dioph_approx}.  For $x \in \rel$, denote by $\fr x$ the unique real number in $[0,2\pi)$ such that, for some integer $m$, $x = 2\pi m + \fr x$. 

\begin{lemma}
  \label{lem:dense}
  If $\vartheta$ is an irrational multiple of $2\pi$, we have
  \begin{align*}
    \set{\fr{n\vartheta} : n\in\nat}\text{is dense in }[0,2\pi]. 
  \end{align*}
\end{lemma}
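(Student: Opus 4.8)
The plan is to reprove this by the classical Dirichlet pigeonhole argument, which is precisely what underlies the cited estimate of \cite{cassels59_introd_to_dioph_approx}. Fix a target $t\in[0,2\pi)$ and a tolerance $\varepsilon>0$; it suffices to exhibit $n\in\nat$ with $\val{\fr{n\vartheta}-t}<\varepsilon$, since then $\set{\fr{n\vartheta}:n\in\nat}$ is dense in $[0,2\pi)$ and hence in $[0,2\pi]$.

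\emph{Step 1: produce an arbitrarily small nonzero stride.} Pick an integer $N>2\pi/\varepsilon$ and partition $[0,2\pi)$ into the $N$ half-open intervals $I_\ell=[2\pi\ell/N,\ 2\pi(\ell+1)/N)$ for $0\le\ell<N$. Among the $N+1$ reals $\fr{i\vartheta}$ with $0\le i\le N$, two lie in the same $I_\ell$ by pigeonhole, say $\fr{i\vartheta}$ and $\fr{j\vartheta}$ with $i<j$. Setting $k=j-i\in\nat$, the value $\fr{k\vartheta}$ differs from $\fr{j\vartheta}-\fr{i\vartheta}$ by $0$ or $2\pi$, so $\fr{k\vartheta}$ lies within distance $2\pi/N$ of either $0$ or $2\pi$. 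Since $\vartheta$ is an irrational multiple of $2\pi$, $\fr{k\vartheta}\ne 0$; thus either $\fr{k\vartheta}=\delta$ with $0<\delta<2\pi/N$, or $\fr{k\vartheta}=2\pi-\delta$ with $0<\delta<2\pi/N$. In the second case, adding $k\vartheta$ repeatedly advances the circle coordinate by $-\delta$ modulo $2\pi$, i.e.\ it is a small backward stride of length $\delta$; in either case we have a "stride" of size $\delta<\varepsilon$.

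\emph{Step 2: the multiples of the stride cover the circle.} The points $\fr{k\vartheta},\fr{2k\vartheta},\fr{3k\vartheta},\dots$ are obtained one from the next by rotating through $\pm\delta$ with wraparound, so consecutive terms are at circular distance exactly $\delta$, and after at most $\lceil 2\pi/\delta\rceil$ of them the sequence has wound once around $[0,2\pi)$, leaving no gap larger than $\delta$. Hence every point of $[0,2\pi)$ — in particular $t$ — is within $\delta<\varepsilon$ of some $\fr{nk\vartheta}$, and $nk\in\nat$. As $\varepsilon$ was arbitrary, density follows. (Alternatively, one may simply invoke \cite[Theorem~1, p.~11]{cassels59_introd_to_dioph_approx}.)

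The only place where care is needed — and thus the main obstacle, though a routine one — is the bookkeeping around the wraparound at $2\pi$: making precise that a stride close to $2\pi$ behaves like a small backward stride, and that the partial sums $\fr{mk\vartheta}$ genuinely leave no gap exceeding $\delta$. The single nontrivial hypothesis, irrationality of $\vartheta/2\pi$, enters exactly once, to guarantee $\fr{k\vartheta}\ne 0$ so that $\delta>0$.
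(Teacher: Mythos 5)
Your proof is correct. Note that the paper does not actually prove \Cref{lem:dense} at all---it simply cites \cite[Theorem~1, p.~11]{cassels59_introd_to_dioph_approx} (the one-dimensional Kronecker/Weyl statement)---so there is no in-paper argument to compare against; what you have written is the standard self-contained Dirichlet pigeonhole proof of exactly that cited fact. Both steps are sound: the pigeonhole on $N+1$ points in $N$ cells yields a nonzero stride $\delta<2\pi/N<\varepsilon$ (with irrationality of $\vartheta/2\pi$ used, correctly and only, to rule out $\fr{k\vartheta}=0$), and the multiples of the stride sweep the circle with gaps at most $\delta$. The one point you flag as bookkeeping---circular versus linear distance near the endpoints $0$ and $2\pi$---is indeed routine: for a target $t$ with $(t-\varepsilon,t+\varepsilon)\subseteq(0,2\pi)$ and $\delta\le\varepsilon$, circular distance less than $\delta$ already forces ordinary distance less than $\delta$, and density at the endpoints of $[0,2\pi]$ follows from density on the open interval. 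So the argument is complete as written; it trades the paper's black-box citation for a short elementary proof, which is arguably preferable in a self-contained exposition.
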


\begin{proof}[Proof of \Cref{thm:pos1} for non-real eigenvalues]
If $|\lambda|>1$, the algorithm answers \emph{yes}. The justification is as follows. When $\sset T$ is a halfplane, there exist positive real numbers $\alpha_0, \phi_1, \phi_2$, with $\phi_1<\phi_2$,  such that for all $\alpha>\alpha_0$ and $\phi_1<\phi<\phi_2$, the point with polar coordinates $(\alpha, \phi)$ is in $\sset T$. This simply means that the halfplane contains a cone minus a bounded set. 

The matrix $M$ is assumed to be non-degenerate, which implies that the rotation angle $\vartheta$ in \eqref{eq:polar} is an irrational multiple of $2\pi$. So by applying \Cref{lem:dense} to this number, we see that the intersection of the set 
\begin{align}
  \label{eq:vartheta set}
  \set{n\vartheta+\vartheta_0+\phi\mod 2\pi\st n\in\nat}
\end{align}
and the interval $(\phi_1, \phi_2)$ contains infinitely many points. From $|\lambda|>1$, it follows that the sequence of points $\vp M^n$ will enter the cone mentioned above, which is a subset of $\sset T$, infinitely many times.

Suppose now that $|\lambda|<1$.\footnote{The rotation case $|\lambda|=1$ is handled in the next subsection in a more general setting. } When the halfplane $\sset T$ has distance to the origin equal to 0, or when the source $\sset S$ is unbounded, the algorithm answers \emph{yes}, with a justification symmetric to the one above. Assume that $\sset T$ has distance to the origin equal to $\delta>0$ and let  $\sset S$ be bounded with radius $\rho$. Choose some $N\in\nat$  such that $\rho|\lambda|^N<\delta$, then for any source point $\vp\in\sset S$, and all $n>N$, $\vp M^n$ is not in the target $\sset T$. To decide the multiple reachability problem, consider the semialgebraic sets, defined for all $n\in\set{0,1,\ldots, N}$ as
\begin{align*}
  \sset S_n\defeq\set{\vp\in\sset S\st \vp M^n\in\sset T},
\end{align*}
and decide whether there are $m$ among them that have nonempty intersection. 
\end{proof}

We turn our attention now to the case where the eigenvalues of the matrix $M$ are real. We do a case analysis depending on whether the eigenvalues are distinct or not, and whether they are positive or not.

\subsubsection{Diagonalisable $M$ with distinct positive eigenvalues.}
In Jordan normal form, the matrix $M$ is $BDB^{-1}$ where $D$ is the diagonal matrix and $B$ is an invertible matrix with real algebraic entries. We can replace $\sset S$ by $\sset S \cdot B$, and the target set by $B^{-1}\cdot \sset T$. As a consequence we can simply assume that \begin{align*}
	M=\begin{pmatrix}
		\rho_1 & 0\\
		0 & \rho_2
	\end{pmatrix}.
\end{align*}
We will also assume without loss of generality that $\rho_1>\rho_2 > 0$.
The algorithm rests on the following lemma.
\begin{lemma}
	\label{lem:characterisation of orbits}
	Let $M$ be as above, $H$ a halfplane, $\vp\in \mathbb{R}^2$ a point, and $\vp_0,\vp_1,\ldots$ its orbit under $M$.
	The orbit can switch from $H$ to $\rel^2 \setminus H$, or conversely, at most twice. In particular, the orbit is either ultimately in $H$ or ultimately in $\rel^2\setminus H$.
\end{lemma}
\begin{proof}
	We begin by observing that for all real numbers $a_1, a_2, a_3$, not all zero, and positive reals $b_1, b_2$, the function $f\st \rel \to \rel$, defined as
	\begin{align}
		\label{eq:at most two zeros}
		x\mapsto a_1b_1^x+a_2b_2^x+a_3,
	\end{align}
	has at most two zeros. Indeed, since $f$ is continuous, by Rolle's theorem, between any two zeros of $f$, $f'$ has a zero. As a consequence, if $f$ had more than two zeros, $f'$ would have more than one zero. But since $f'$ has the form $\alpha_1b_1^x+\alpha_2b_2^x$ for real numbers $\alpha_1,\alpha_2$, this is impossible.
	
	Let $c_1,c_2,c_3$ be real numbers such that the point $(x,y)$ belongs to the halfplane $H$ if~and~only~if
	\begin{align*}
		c_1x+c_2y+c_3>0.
	\end{align*}
	The orbit of such a point under $M$ is $(x\rho_1^n,y\rho_2^n)$. Consider now the expression
	\begin{align}
		c_1x\rho_1^n + c_2y\rho_2^n+c_3.
		\label{eq:EXP}
	\end{align}
	From the observation about the zeros of \eqref{eq:at most two zeros} above, this expression as a function of $n$ may change sign at most twice, which establishes the lemma.
\end{proof}
From this proof we observe that when the halfplane is given by a homogeneous inequality, the orbit cannot leave the halfplane and come back. For other cases, we proceed to prove that the gaps between consecutive visits to the halfplane $H$ cannot be longer than 3. 

\subsubsection{Diagonalisable $M$ with a single negative eigenvalue.}
Suppose that the matrix $M$ is
\begin{align*}
  M = \begin{pmatrix}\rho_1 & 0\\
                          0 & \rho_2
       \end{pmatrix}
\end{align*}
where $\rho_1 < 0$ and $\rho_2 > 0$. We do not make any assumptions on $|\rho_1|$ and $|\rho_2|$.  Consider a starting point $(x,y) \in \rel^2$ and a halfplane $H$ defined by $c_1x + c_2y > c_3$.  The orbit of $(x,y)$ visits $H$ at time $n$ if

\begin{subnumcases}{}
  c_1x|\rho_1|^n + c_2y\rho_2^n > c_3, & n \text{ even}, \label{eq:eqevens}
  \\
  -c_1x|\rho_1|^n + c_2y\rho_2^n > c_3, & n \text{ odd}. \label{eq:eqodds}
\end{subnumcases}

Depending on the signs of $x$ and $y$, one of the inequalities implies the other.  Without loss of generality suppose \eqref{eq:eqevens} implies \eqref{eq:eqodds}.  By \Cref{lem:characterisation of orbits}, the set of $n$ satisfying \eqref{eq:eqevens} forms an interval in $\nat$.  It follows that the gaps between two consecutive visits from $(x,y)$ to $H$ is at most 2.

\subsubsection{Diagonalisable $M$ with two negative eigenvalues.}
       
Next, suppose that $\rho_1<0$ and $\rho_2<0$.
Clearly, for all $c_1,c_2,c_3\in \mathbb{R}$ with $c_3 \leq 0$ and $c_1,c_2$ not 
both zero, the inequality
$c_1\rho_1^n+c_2\rho_2^n>c_3$ has infinitely many solutions.  
We thus focus
on the case that $c_3>0$.  
Here we prove that the gap between two consecutive visits of the orbit of $(x,y) \in \rel^2$ to $H$ is at most 3. To this end, let $(x,y)\in\rel^2$, and define the function $F\st \rel \to \rel$,
\begin{align*}
  F(t)\defeq c_1x|\rho_1|^t + c_2 y |\rho_2|^t.
\end{align*}
Then we have that for $n\in\nat$,
\begin{align}
  \label{eq:fn or neg}
  c_1x\rho_1^n+c_2 y \rho_2^n =
  \begin{cases}
    \  \ \ F(n) \text{ if $n$ is even,}\\
    -F(n) \text{ if $n$ is odd.}
  \end{cases}
\end{align}
Assuming that $c_1,c_2$ and $x,y$ are nonzero (otherwise we would have an even simpler case), and $\rho_1\ne \rho_2$, we see that the function $F(t)$ is bounded for positive reals $t$ if and only if $|\rho_1|\le 1$ and $|\rho_2|\le 1$. If $F(t)$ is unbounded, then from \eqref{eq:fn or neg} we see that for any $(x,y)\in\rel^2$ nonzero, the system will enter the halfplane $H$ infinitely many times.

If on the other hand $F(t)$ is bounded in $\rel_+$ then the following two inequalities cannot hold simultaneously:
\begin{align*}
  c_1 x \rho_1 + c_2 y \rho_2 &< c_3\\
  c_1 x \rho_1^3 + c_2 y \rho_2^3 &> c_3. 
\end{align*}
Indeed, the two expressions on the left hand side have the same sign, however the second one is smaller in magnitude due to $|\rho_1|\le 1$ and $|\rho_2|\le 1$. The claim that the gaps between two consecutive visits from $(x,y)$ to $H$ is at most 2 follows. 

\subsubsection{Non-diagonalisable $M$ with a repeated eigenvalue.}
       
A version of Lemma~\ref{lem:characterisation of orbits} also holds in case $M$ has a repeated eigenvalue $\rho$.  In this case, every orbit under $M$ can switch from $H$ to $\rel^2 \setminus H$, or conversely, at most once.  Indeed, by a change of basis, we can assume that $M$ has the form
\begin{align*}
  M =\begin{pmatrix}
       \rho & 1 \\
       0 & \rho
     \end{pmatrix}
\end{align*}
Then the expression corresponding to~\eqref{eq:EXP} is
\begin{align*}
  (nxc_2\rho^{-1}+c_2y+c_1x)\rho^n +c_3.
\end{align*}
If $\rho>0$, then it is clear that this expression can change sign at most once as $n$ ranges over $\mathbb{N}$. If, on the other hand, $\rho<0$, we can do a similar analysis as above. If $|\rho|>1$ then the halfplane is entered infinitely often. If $|\rho|\le 1$, we can prove, as we did above, that the gaps between two consecutive visits in $H$ is at most 2. 

\subsubsection{$M$ with a zero eigenvalue.}

This case is one-dimensional, and it can be shown directly that the orbit can switch from $H$ to $\rel^2\setminus H$ (or vice versa) at most once.

Having handled all the cases, we are now ready to give a proof of \Cref{thm:pos1} for real eigenvalues. 

\begin{proof}[Proof of \Cref{thm:pos1}(i) for $M$ with real eigenvalues]\ \linebreak
\Cref{lem:characterisation of orbits} and the case analysis above, implies that any orbit that enters $H$ at least $m$ times must harbour a segment of $m$ visits to $H$ whose gaps between consecutive visits is at most 3. In other words, the orbit of $\vp$ enters $\sset T$ at least $m$ times if and only if there exist $n_1, \ldots, n_m\in\nat$ such that
  \begin{align*}
    \vp M^{n_i}\in\sset T \ \ \text{ and } \:\: 0 < n_{i+1}-n_i \le 3 \:\: \text{ for all $n_i$}. 
  \end{align*}
  This contiguous multiple reachability question can easily be reduced to a union of single reachability queries. Indeed, an orbit contains a pattern (of visits and not visits to $H$) of length $3m$ if and only if it reaches a certain polytope subset $\sset P$ of $\rel^2$;
  A formula defining $P$ can be constructed by considering the sets $\set{x \in \rel^2\st M^k x \in H}$ and $\set{x \in \rel^2\st M^k x \notin H}$ for $0 \le k \le 3m$.
  Thus multiple reachability is reduced to at most $2^{3m}$ instances of single reachability from $\sset S$ to $\sset P$, which can be solved by invoking the algorithm from~\cite{shaullstacs}. 
\end{proof}


\end{document}